\theoremstyle{plain}
\newtheorem{proposition}{Proposition}
\newtheorem{theorem}[proposition]{Theorem}
\newtheorem{lemma}[proposition]{Lemma}
\theoremstyle{definition}
\newtheorem{definition}{Definition}
\newtheorem{example}{Example}
\newtheorem{remark}{Remark}
\title{A Functorial Construction of Quantum Subtheories}
\author{Ivan Contreras
\and
Ali Nabi Duman
}
\address{Department of Mathematics\\
University of Illinois at Urbana-Champain\\
Illinois, USA}
\address{Department of Mathematics and Statistics\\
King Fahd University of Petroleum and Minerals\\
Dhahran, Saudi Arabia}
\email{icontrer@illinois.edu, aliduman@kfupm.edu.sa}
\begin{document}
\begin{abstract}
We apply the geometric quantization procedure via symplectic groupoids proposed by  E. Hawkins to the setting of epistemically restricted toy theories formalized by Spekkens \cite{S}. In the continuous degrees of freedom, this produces the algebraic structure of quadrature quantum subtheories. In the odd-prime finite degrees of freedom, we obtain a functor from the Frobenius algebra in \textbf{Rel} of the toy theories to the Frobenius algebra of stabilizer quantum mechanics.
\end{abstract}
\maketitle

\section{Introduction}

The aim of geometric quantization is to construct, using the geometry of the classical system, a Hilbert space and a set of operators on that Hilbert space which give the quantum mechanical
analogue of the classical mechanical system modeled by a symplectic manifold \cite{AE, BW, W}. Starting with a symplectic space $M$ corresponding to the classical phase space, the square integrable functions over $M$ is the first Hilbert space in the construction, called \emph{prequantum Hilbert space}. In this case, the classical observables are mapped to the operators on this Hilbert space and Poisson bracket is mapped to the commutator. The desired \emph{quantum Hilbert space} consists of the sections of the prequantum Hilbert space which the depends on the "position" variables. These "position" variables are obtained by splitting the phase space via the \emph{polarization} $P$ which is the lagrangian subspace (i.e. the maximal subspace where the symplectic form vanishes) of the phase space.

The space of functions on $M$ is a commutative algebra under the operations of pointwise addition and multiplication. A bivector field on $M$ determines a Poisson bracket so that $M$ can be regarded as an approximation to a noncommutative algebra. The quantization approach due to Riefel aims to obtain such a $C^*$-algebra which is approximated by the Poisson algebra of the functions on $M$ \cite{R}. In this case, the algebra after quantization is a continuous field of $C^*$-algebras rather than a single algebra. On the other hand, Hawkins suggests a quantization recipe using symplectic groupoids to obtain a single $C^*$-algebra \cite{H}. In this paper, we use the quantization formulation of Hawkins to investigate the epistemic toy theory due to Spekkens \cite{S, S1}.

Spekkens introduced this toy theory in support of epistemic view of quantum mechanics \cite{S1}. The toy theory reproduces a large part quantum theory by positing restrictions on the knowledge of an observer. The distinctively quantum phenomena arises in the toy theory include complementarity, no-cloning, no-broadcasting, teleportation, entanglement, Choi-Jamiolkowski isomorphism, Naimark extension etc. On the other hand, the phenomena, such as Bell inequality violations, non-contextuality inequality violations and computational speed-up, do not arise in the toy theory.

The toy theory, that we are interested in, is the generalization of the original theory to the continuous and finite variables \cite{S}. This is achieved by positing a restriction on what kind of statistical distributions over the space of physical states can be prepared. The new theory is called \emph{epistricted theory}. In this way, quantum subtheories, Gaussian subtheory of quantum mechanics, stabilizer subtheory for qutrits, Gaussian epistricted optics can be obtained from statistical classical theories, Liouville mechanics, statistical theory of trits, statistical optics; respectively.

The epistemic restriction defined on the classical phase-space states that an agent knows the values of a set of variables that commute relative to the Poisson bracket and maximally ignorant otherwise. Hence, a symplectic structure, that appears in the function space of the phase space, has mathematical correspondence with the ingredients of the quantization scheme. As a result, we conclude that the geometric quantization, via Hawkins’ symplectic groupoid approach, produces a $C^*$-algebra that encodes the algebraic structure of the quadrature subtheories. Moreover, this construction gives us a functor from epistricted theories to the quantum subtheories.

In the second part of this paper, we construct a similar quantization functor of the toy theory for discrete degrees of freedom. In this case, the toy theory is defined precisely same as the continuous case except that the finite dimensional symplectic vector space is over a finite field with odd prime characteristic. However, in order to apply groupoid quantization we resort the methods of Categorical Quantum Mechanics pioneered by Abramsky and Coecke \cite{AC}. The categorical description of the toy theory is given in \cite{CE, CES}, where toy theory is formulated as a subcategory of dagger compact symmetric monodial category of finite sets and relations \textbf{Rel} and the toy observables correspond to dagger frobenius algebras.

We start our construction with the dagger Frobenius algebras of the toy observables which are functorially characterized as groupoids by Heunen, Catteneo and the first author in \cite{CCH}. After equipping the resulting groupoid with a symplectic structure, we construct the pair groupoid to apply the quantization recipe of Hawkins. One can also obtain this pair groupoid from a different direction called \emph{$CP^*$-construction} introduced in \cite{CHK}. In the category of Hilbert spaces, Frobenius algebras  correspond to finite dimensional $C^*$-algebras under this construction as as consequence of \cite{V}. For the category \textbf{Rel}, the pair groupoids are the objects of $CP^*[\textbf{Rel}]$. Hence, our main result establishes a functor from the dagger Frobenius algebra in \textbf{Rel} for epistricted theories to the Frobenius algebra in the category of Hilbert spaces.

The outline of this paper is as follows. We begin section 2 with a brief summary of the geometric quantization procedure. We then discuss epistricted theories of continuous variables and their correspondence in geometric quantization framework. We next briefly review Eli Hawkins' groupoid quantization recipe from which we obtain the usual Moyal quantization as a twisted group $C^*$-algebra from the geometric formulation of epistricted theories. We finally conclude that the resulting $C^*$-algebra contains phase-space formalism for quadrature subtheories. In section 3, we follow the same quantization procedure for the odd-discrete degrees of freedom. We end the paper with the conclusion and discussions.

\section{Continuous degrees of freedom}

The main idea in this section is to describe the general framework of geometric quantization in the context of epistemically restricted theories with continuous variables. We start with a quick overview of the stantard literature on geometric quantization and then we move on to the interpretation for epistemically restricted theories. We end the section with the algebraic counterpart of geometric quantization, introduction Hawkins' approach of quantization via symplectic groupoids. The outcome of this approach is a $C^*$-algebra for the epistricted theory.

\subsection{Overview on Geometric quantization}
There are several ideas behind the construction of geometric quantization, however, the main objective is to produce quantum objects by using the geometry of the objects from the classical theory. In the sequel we follow closely the approach of Bates and Weinstein \cite{BW}.

\subsubsection{The WKB method}
The WKB picture appears as an effort to describe quantum mechanics from a geometric viewpoint. It essentially approximates the solution of the time-independent Schr\"odinger equation, in the form
$$\phi=e^{iS/\hbar},$$
where $S$ is a solution of the Hamilton-Jacobi equation
$$H(x,\partial S/\partial x)=E.$$


We can then use the geometry of the phase space to realize the solution to the Schr\"odinger equation as a Lagrangian submanifold $\mathcal L$ of the level set $H^{-1}(E)$. More precisely, let us consider the semiclassical approximation for $\phi$. From the transport equation


$$a\triangle S+2\sum \frac{\partial a}{ \partial x_j}\frac{\partial S}{\partial x_j}=0 $$
where $a$ is a function on $\mathbb{R}^n$, and after multiplying both sides by $a$, we obtain that
\begin{equation}\label{Div}
\mbox{div} (a^2\nabla S)=0.
\end{equation}
Now, if we consider the vector field
$$X_{H|L}=\sum_{j}\frac{\partial S}{ \partial x_j}\frac{\partial}{\partial q_j}-\frac{\partial V}{ \partial q_j}\frac{\partial}{\partial p_j}$$ onto $\mathbb{R}^n$ where the hamiltonian $H$ is $H(q,p)=\sum p_i^2/2+V(q)$ and $|dx|=|dx_1\wedge \ldots \wedge dx_n|$ is the canonical density on $\mathbb{R}^n$, its projection $X^{(x)}_H$  onto $\mathbb R^n$ satisfies the following invariance condition
\begin{equation}\label{Lie}
\mathfrak{L}_{X^{(x)}_H}(a^2|dx|)=0,
\end{equation}
where $\mathfrak L$ denotes the Lie derivative, if we restrict to the Lagrangian submanifold $\mathcal L=im (dS)$. Since the vector field $X_H$ is tangent to the manifold $\mathcal L$ and $\mathfrak L$ is diffeomorphims invariant, Equation \ref{Lie} implies that the pullback
$\pi^*(a^2|dx|)$ is invariant under the flow of $X_H$, where $\pi:T^*\mathbb{R}^n\rightarrow \mathcal L$ denotes the projection onto $\mathcal L$.



This discussion implies that a semi-classical state can be defined geometrically as a Lagrangian submanifold $\mathcal L$ of $\mathbb{R}^{2n}$, equipped with a half density function $a$. This semi-classical state is stationary when $\mathcal L$ lies in the level set of the Hamiltonian and the half density $a$ is invariant under the Hamiltonian flow. Transformations of the state correspond to Hamiltonians on $\mathbb R^{2n}$.
To summarize this geometric picture, Table I exhibits the correspondence between semi-classical objects (of geometric nature) and quantum objects (of algebraic nature) in this particular case.

\subsubsection{Basic symplectic and Poisson geometry}
From now one, we consider finite dimensional vector spaces $V$ to be symplectic, if they are equipped with a non-degenerate skew form $\omega$. For a vector subspace $W$ of $V$, its orthogonal complement is defined by $W^{\bot}=\{x\in V: \omega(x,y)=0, \forall y\in W \}$. We have the following special cases for $W$:
\begin{itemize}
\item $W$ is isotropic if $W \subseteq W^{\bot}$.
\item $W$ is coisotropic if $W^{\bot} \subseteq W$.
\item $W$ is symplectic if $W\cap W^{\bot}=\{0\}$.
\item $W$ is Lagrangian if $W=W^{\bot}$.
\end{itemize}
It can be easily checked that if $W$ is Lagrangian, then
$\dim W= \frac{1}{2}\dim V $.\\
\begin{definition}
A manifold is called Lagrangian (resp. isotropic, coisotropic and symplectic) if its tangent space is a Lagrangian subspace at every point.
\end{definition}
We also consider Poisson algebras, which are commutative algebras $(P,+,\bullet)$ equipped with a Lie bracket $[,]$ that is a derivation for the commutative product. As a particular case in our discussion, the algebra of functions of a symplectic manifold $(M,\omega)$ is naturally a Poisson algebra.

\begin{table*}[t]
  \centering
\begin{tabular}{|l|p{6cm}|p{6cm}|}
  \hline
  Object & Semi-classical (geometric) version& Quantum (algebraic) version\\ \hline \hline
  phase space & $(\mathbb{R}^{2n},\omega)$ & Hilbert space $\mathfrak{H}_{\mathbb{R}^{2n}}$ \\ \hline
  state & Lagrangian submanifold of $\mathbb{R}^{2n}$ with half-density & half-density on $\mathbb{R}^{n}$ \\ \hline
  transformations & Hamiltonian $H$ on $\mathbb{R}^{2n}$ & operator $\hat{H}$ on smooth half densities \\ \hline
  stationary state & Lagrangian submanifold in level set of $H$  with invariant half-density & eigenvector of $\hat{H}$ \\
  \hline
\end{tabular}
\caption{Correspondence between classical and quantum objects}
  \label{tab:1}
\end{table*}

\subsubsection{Prequantum line bundle}
In this section, we follow Dirac's approach to axiomatize the quantization procedure.
\begin{definition}
A prequantization is a linear map $P\to \hat P_{\mathcal H}$ from a Poisson algebra (more precisely the algebra of functions of a Poisson manifold $M$) into the set of operators on a (pre)-Hilbert space $\mathcal H$, satisfying the following properties:
\begin{enumerate}
\item $\hat{\mbox{Id}_{P}}=\mbox{Id}_{\hat P_{\mathcal H}}$.
\item $\hat {[F,G]}=\frac i h (\hat F \hat G -\hat G \hat F)$.
\item $\hat{F^{*}}=(\hat F)^*$, where $*$ denotes complex conjugation on left side, and adjunction on the right side.
\end{enumerate}
\end{definition}
\begin{definition}
A prequantization is called quantization if, in addition to the properties above, the following condition is satisfied:
\begin{itemize}
\item 4. For a complete set of functions $\{F_i\}$, its quantization $\{\hat F_i\}$ is also a complete set of operators.

\end{itemize}
\end{definition}

\begin{proposition}
In the specific case where $M$ is a cotangent bundle $T^*N$, a prequantization \footnote{refered in the literature as the Koopman-Van Hove- Segal prequantization} can be constructed and it has the following form
\begin{equation}\label{PQ}
\hat F= F + \frac{\hbar}{2\pi i}X_F-\theta(X_F),
\end{equation}
where $X_F$ is a Hamiltonian vector field with generating function $F$ and $\theta$ is a primitive of the Liouville form $\omega_{T^*N}$.
\end{proposition}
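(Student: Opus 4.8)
The plan is to verify that the assignment $F \mapsto \hat F$ given by \eqref{PQ} satisfies the three axioms of a prequantization, working on the prequantum Hilbert space $\mathcal H = L^2(T^*N,\epsilon)$ of square-integrable complex functions on $M=T^*N$ with respect to the Liouville measure $\epsilon = \omega^n/n!$. First I would fix the action of each term: $F$ and $\theta(X_F)$ act by pointwise multiplication, while the Hamiltonian vector field $X_F$, determined by $\iota_{X_F}\omega = -dF$, acts as a first-order derivation. The key structural remark is that, because $T^*N$ is exact with $\omega = d\theta$, the prequantum line bundle is trivial and the last two terms assemble into a covariant derivative: writing $c=\frac{\hbar}{2\pi i}$ and introducing the connection $\nabla = d - c^{-1}\theta$, one has $c\,\nabla_{X_F} = c\,X_F - \theta(X_F)$, so that $\hat F = F + c\,\nabla_{X_F}$. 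Packaging the operator this way is what makes the remaining verifications transparent.

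Axioms (1) and (3) are the routine ones. Linearity of $F\mapsto\hat F$ is immediate, and a constant function has $X_F=0$ and $\theta(X_F)=0$, whence $\hat 1 = \mathrm{Id}$, giving (1). For (3), multiplication by a real function is self-adjoint, and for real $F$ the field $X_F$ is a real, divergence-free vector field: since the Hamiltonian flow preserves the Liouville volume, so that $\mathcal L_{X_F}\epsilon = 0$, integration by parts yields $X_F^{*}=-X_F$. As $c$ is purely imaginary, $c\,X_F$ is then self-adjoint, and $\theta(X_F)$ is a real multiplication operator, so $\hat F^{*}=\hat F$, which is axiom (3). The only analytic caveat is to restrict to a dense domain (for instance compactly supported smooth functions) on which these integrations by parts and the essential self-adjointness hold.

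The heart of the proof is axiom (2). Here I would compute the curvature of $\nabla$, which is $R = -c^{-1}\,d\theta = -c^{-1}\omega$, and invoke the two Poisson-geometric identities $[X_F,X_G]=X_{\{F,G\}}$ and $\omega(X_F,X_G)=\{F,G\}$. Expanding the commutator of the prequantum operators, using that multiplication operators commute, the Leibniz relations $[\nabla_{X_F},G]=X_F(G)$ and $[F,\nabla_{X_G}]=-X_G(F)$, and the standard curvature identity $[\nabla_{X_F},\nabla_{X_G}] - \nabla_{[X_F,X_G]} = R(X_F,X_G)$, the multiplication and derivation contributions recombine into
\begin{equation*}
[\hat F,\hat G] = c\,\{F,G\} + c^{2}\,\nabla_{X_{\{F,G\}}} = c\,\widehat{\{F,G\}}.
\end{equation*}
The cross terms conspire precisely because $c^{2}R(X_F,X_G) = -c\,\omega(X_F,X_G) = -c\,\{F,G\}$ cancels one of the two $\{F,G\}$ contributions coming from the Leibniz brackets, leaving a single copy that reconstitutes the multiplication part of $c\,\widehat{\{F,G\}}$. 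This is exactly axiom (2) up to the overall constant.

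The main obstacle I anticipate is bookkeeping rather than conceptual: keeping the signs in the Hamiltonian-vector-field and Poisson-bracket conventions consistent throughout, and matching the normalization of the curvature to the constant $\frac{\hbar}{2\pi i}$ appearing in \eqref{PQ} so that the proportionality constant in axiom (2) emerges correctly (the discrepancies between factors of $2\pi$ and of $\hbar$ are convention-dependent and must be tracked carefully). Conceptually, the single geometric fact that does all the work is that the prequantum connection has curvature proportional to the symplectic form $\omega$; once this is established, the bracket relation is a formal consequence of the curvature identity, and everything else is the elementary functional analysis of multiplication and divergence-free derivation operators on $\mathcal H$.
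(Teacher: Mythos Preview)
Your proposal is correct and, in fact, goes well beyond what the paper does: the paper states this proposition without proof, simply recording the Koopman--Van Hove--Segal formula as a known result and pointing to the literature via the footnote. There is no argument in the paper to compare against.

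Your verification is the standard textbook one, and it is carried out correctly. The key organizational move---repackaging $\frac{\hbar}{2\pi i}X_F - \theta(X_F)$ as $c\,\nabla_{X_F}$ for the connection $\nabla = d - c^{-1}\theta$ on the trivial line bundle---is exactly what the paper itself alludes to in the sentence immediately following the proposition, where it rewrites the prequantization formula as $\hat F = F + \frac{\hbar}{2\pi i}\nabla_{X_F}$ for general symplectic manifolds. So your framing is fully aligned with the paper's perspective, and your curvature computation for axiom (2) is the right mechanism. Your caveat about sign and normalization conventions is well placed: the paper is loose on this point (note the mismatch between $\frac{i}{h}$ in axiom (2) of the definition and $\frac{\hbar}{2\pi i}$ in the formula), so any honest write-up would have to fix a convention and carry it through, as you indicate.
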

In order to implement this prequantization for a  arbitrary symplectic manifold $(M, \omega)$, we require a complex line bundle over $M$, equipped with a Hermitian structure and a Hermitian connection $\nabla$, for which the prequantization formula \ref{PQ} takes the following form
\begin{equation}
\hat F= F +\frac{\hbar}{2\pi i} \nabla_{X_F}.
\end{equation}

Provided a compatibility condition between curv$(\nabla)$ and $\omega$, this formula gives a prequantization for $(M,\omega)$.

\subsubsection{Polarization}
It is easy to realize in some examples that the Hilbert space of prequantization is too big for the completeness condition 4 to hold. By using the ordinary viewpoint of quantum mechanics, only half of the coordinates of the classical phase space are required to write down the wave functions, depending whether the coordinate or momentum representation is considered. In (symplectic) geometric terms, for general symplectic manifolds, a polarization is defined as follows:
\begin{definition} Let $(M, \omega)$ be a symplectic manifold. A \emph{polarization} of $M$ is a Lagrangian involutive distribution $\mathcal P$ of $M$.
\end{definition}

Thus the quantization space consists of functions constant along the leaves of a the distribution $\mathcal P$ on $M$, more precisely, the quantization Hilbert sapce $\mathcal H$ is the space of sections $s$ of the complex line bundle on $M$ such that
\begin{equation}
\nabla_{X_P}s=0,
\end{equation}
where $X_P$ is a vector field tangent to the polarization $\mathcal P$.

\subsection{Quadrature Epistricted Theories}

We now introduce the quadrature epistricted theories for continuous variables \cite{S}. The epistemic restrictions on classical variables are adopted from the condition of the joint measurability of quantum observables. 
\begin{definition}A set of variables are \emph{jointly knowable} if and only if it is commuting with respect to the Poisson bracket. 
\end{definition}
The other restriction besides joint knowability is that an agent can know only the variables which are linear combination of the position and momentum variables.
\begin{example}\textbf{(Darboux coordinates).}\label{Dar}
If we start with the phase space $\Omega=\mathbb{R}^{2n}$ where a point is denoted by $\mathbf{m}=(\mathbf{p}_1,\mathbf{q}_1,\ldots,\mathbf{p}_n,\mathbf{q}_n)$, epistemic restrictions imply that the functionals $f:\Omega \to \mathbb{R}$ are of the form $$f=\mathbf{a_1}q_1+\mathbf{b_1}p_1+\ldots + \mathbf{a_n}q_n + \mathbf{b_n}p_n+ \mathbf{c}$$ where $\mathbf{a_1}, \mathbf{b_1},\ldots, \mathbf{a_n},\mathbf{b_n},\mathbf{c}\in \mathbb{R}$ and $p_i(\mathbf{m})=\mathbf{p_i}$ and $q_i(\mathbf{m})=\mathbf{q_i}$ are functionals associated with momentum and position , respectively. Hence, each functional $f$ is associated with a vector $\mathbf{f}=(\mathbf{a_1}, \mathbf{b_1},\ldots, \mathbf{a_n},\mathbf{b_n})$. It is not hard to show that the value of the Poisson bracket over the phase space is uniform and equal to the symplectic inner product: $$[f,g]_{PB}(\mathbf{m})=\sum_{i=1}^n(\frac{\partial f}{\partial q_i}\frac{\partial g}{\partial p_i}- \frac{\partial g}{\partial q_i}\frac{\partial f}{\partial p_i})(\mathbf{m})=\langle \mathbf{f},\mathbf{g}\rangle$$
where
$$\langle \mathbf{f},\mathbf{g} \rangle=\mathbf{f}^T J \mathbf{g}$$
and $J$ is the skew symmetric $2n\times 2n$ matrix with components $J_{ij}=\delta_{i,j+1}-\delta_{i+1,j}$. Hence, the vector space $\Omega$ becomes a symplectic vector space with the symplectic inner product $\omega = \langle \cdot, \cdot \rangle$. This allows us to give the geometric presentation of the quadrature variables.

The only set of variables jointly knowable are the ones that are Poisson commuting. In symplectic geometry, this set corresponds to the subspace $V$ of vectors whose symplectic inner product vanish, i.e. $\forall \mathbf{f}, \mathbf{g}\in V$  $ \langle \mathbf{f}, \mathbf{g} \rangle = 0 $. For a $2n$-dimensional phase space, the maximum possible dimension of such a $V$ is $n$. Such a maximal space is a Lagrangian space as defined above and it corresponds to the maximal possible knowledge an agent can have. In order to specify an epistemic state one should also set the values of the variables on $V$. The linear functional $v$ acting on a quadrature functional corresponds to the set of vectors in $\mathbf{v}\in V$ which is determined via $v(f)=\mathbf{f}^T \mathbf{v} $. That is,for every vector $\mathbf{v}\in V$ we obtain distinct value assignment.

In summary, a pure state in the epistricted theory consists of a Lagrangian subspace $V \in \mathbb{R}^{2n} $ and a valuation functional $v: \mathbb{R}^{2n} \to \mathbb{R}$. In geometric quantization, the half density function can be regarded as this valuation function.

On the other hand, the valid transformations are the symplectic transformations which maps the quadrature variables to itself. These transformations map a phase space vector $\mathbf{m}\in \Omega$ to $S\mathbf{m} + \mathbf{a}$ where $\mathbf{a}$ is a displacement vector and $S$ is $2n\times 2n $ symplectic matrix. The group formed by these transformations is called the \emph{affine symplectic group}, which is subgroup of the hamiltonian symplectomorphism group. Thus, each of these transformations can be obtained from a hamiltonian. Finally, the sharp measurements are parametrized by Poisson commuting sets of quadrature variables (isotropic subspaces $V$) and the outcomes are indexed by the vectors in $V$.
\end{example}

We summarize the correspondence between geometric quantization and epistricted theories in Table 2.

\begin{table*}[t]
  \centering
\begin{tabular}{|l|p{6cm}|p{6cm}|}
  \hline
  Object & Semi-classical version in quantization& Epistricted theories\\ \hline \hline
  phase space & $(\mathbb{R}^{2n},\omega)$ & $(\mathbb{R}^{2n},\omega)$ \\ \hline
  state & Lagrangian submanifold of $\mathbb{R}^{2n}$ with half-density $a:\mathbb{R}^{2n}\to \mathbb{R}$& Lagrangian subspace with a valuation function $v:\mathbb{R}^{2n}\to \mathbb{R}$ \\ \hline
  transformations & hamiltonian $H$ on $\mathbb{R}^{2n}$ & affine symplectic transformation \\
  \hline
\end{tabular}
\caption{Correspondence between geometric quantization and epistricted theories}
  \label{tab:1}
\end{table*}


\subsection{Hawkins' Groupoid Quantization}
The aim of this section is to point out that the epistricted theories can be quantized by a twisted polarized convolution $C^*$-algebra of a symplectic groupoid in the sense of E. Hawkins. The main idea in this method is to find a $C^*$-algebra which is approximated by a Poisson algebra of functions on a manifold. $C^*$-algebra quantization is mainly developed by the work of Rieffel where the quantization is stated as a continuous field of $C^*$-algebras $\{\mathcal{A}_{\hbar}\}$. Hawkins' construction gives a single algebra $\mathcal{A}_1$ by involving additional structures on the symplectic groupoid. 
In his approach, it is possible to reinterpret geometric quantization for a broader class of examples, coming from deformation quantization of Poisson algebras. This gives a rigorous treatment to the \emph{dictionary} strategy of Weinstein relating the symplectic category and its geometrically quantized counterpart \cite{BW}.
\subsubsection{Symplectic groupoids}
We start with the definition of symplectic groupoid, arising from the usual definition of Lie groupoid, requiring compatibility conditions with a symplectic structure on the space of arrows.
\begin{definition}
A \emph{topological groupoid} $\Sigma$
is a groupoid object in the category of topological spaces, that is, $\Sigma$ consists of a space of $\Sigma_0$ of objects and a space $\Sigma_2$ of arrows, together with five continous structure maps:
\begin{itemize}
\item The source map $s:\Sigma_2\rightarrow \Sigma_0$ assigns to each arrow $g\in \Sigma_2$ its source $s(g).$

\item The target map $t:\Sigma_2\rightarrow \Sigma_0$ assigns to each arrow $g\in \Sigma_2$ its target $t(g)$. For two objects $x$, $y\in \Sigma_0$, one writes $g:x\rightarrow y$ to indicate that $g\in \Sigma_2$ is an arrow with $s(g)=x$ and $t(g)$.

\item If $g$ and $h$ are arrows with $s(h)=t(g)$, one can form their composition, denoted $hg$, with $s(hg)=s(g)$ and $t(hg)=t(h)$. If $g:x\rightarrow y$ and $h:y\rightarrow z$ then $hg$ is defined and $hg:x\rightarrow z.$ The composition map $m$ is defined by $m(h,g)=hg$, and it is a well defined map $m: \Sigma_m \to \Sigma_2$, where $\Sigma_m:=\{(h,g): s(h)=t(g)\}.$

\item The unit map $u:\Sigma_0\rightarrow \Sigma_2$ is a two sided unit for composition.

\item The involution map $-^* :\Sigma_2\rightarrow \Sigma_2$. Here, if $g:x\rightarrow y$ then $g^*:y\rightarrow x$ is two sided inverse for composition.
\end{itemize}
\end{definition}

$\Sigma$ is said to be a groupoid over $\Sigma_0$

\begin{definition} A \emph{Lie groupoid} is a topological groupoid $\Sigma$ where $\Sigma_0$ and $\Sigma_2$ are smooth manifolds, and such that the structure maps $s$, $t$, $m$, $u$ and $-^*$ are smooth. Moreover, $s$ and $t$ are required to be submersions so that the domain of $m$ is a smooth manifold.
\end{definition}

\begin{definition} A Lie groupoid $\Sigma$ is called a \emph{symplectic groupoid} if $\Sigma_2$  is a symplectic manifold with symplectic form $\omega$ and the graph multiplication relation $\mathfrak{m}=\{(xy,x,y):(x,y)\in \Sigma_2\}$ is a Lagrangian submanifold of $\Sigma_2 \oplus \overline {\Sigma}_2 \oplus \overline{\Sigma}_2$, where $\overline{\Sigma}$ is the symplectic manifold $(\Sigma_2, -\omega)$.
\end{definition}
This definition is equivalent to say the the symplectic form $\omega$ is \emph{multiplicative}, i.e. it satisfies the following compatibility conditions with the multiplication and projection maps:
\begin{equation}
m^* \omega= pr_1^* \omega + pr_2^* \omega,
\end{equation}
where $pr_1$ and $pr_2$ are the projections of $\Sigma_m$ onto the first and second component, respectively.
As $\mathfrak{m}$ is Lagrangian, one can find a unique Poisson structure on $\Sigma_0$ of a symplectic groupoid such that $s$ is a Poisson map and $t$ is anti-Poisson. Hence, we have the following definition.

\begin{definition} A symplectic groupoid $\Sigma$ is said to \emph{integrate} a Poisson manifold $\Omega$ if there exists a Poisson isomorphism from $\Sigma_0$ onto $\Omega$.
\end{definition}
The following are basic examples of symplectic groupoids, the first one being of central importance for the geometric quantization procedure in epistricted theories.

\begin{example}
(Pair groupoid of a symplectic manifold).   \\
As we will describe more detailed in Definition \ref{pair}, given a smooth manifold $M$, the manifold $M\times M$ is naturally the space of arrows for a Lie groupoid, called the \emph{pair groupoid}. In the case where $M$ is equipped with a symplectic structure $\omega$, then the Lie groupoid Pair$(M)$ is a symplectic groupoid with simplectic structure $\omega \oplus \omega$
\end{example}

\begin{example}
(Cotangent bundle).\\
 If $M$ is a manifold, any vector bundle $E$ over $M$ is a Lie groupoid over $M$, the multiplication is given by fiber addition, the source and target maps map are projection onto the base, whereas the unit is given by the zero section of the bundle. In the particular case that $E=T^*M$ and that $\omega$ is the Liouville form on the cotangetn bundle, it it easy to verify that $T^*M$ is a symplectic groupoid over $M$.
\end{example}

 Here is the Hawkins' strategy for geometric quantization of a  manifold $\Omega$.For a detailed discussion, one can refer to \cite{H}.

\begin{itemize}
\item Construct an symplectic groupoid $\Sigma$ over $\Omega$.
\item Construct a prequantization $(\sigma, L, \nabla)$ of $\Sigma$.
\item Choose a symplectic groupoid polarization $P$ of $\Sigma$ which satisfies both symplectic and groupoid polarization.
\item Construct a "half form" bundle.
\item $\Omega$ is quantized by twisted, polarized convolution algebra $C^*_P(\Sigma, \sigma)$.

\end{itemize}


\begin{proposition}\label{Moyal}
The Hawkins' geometric quantization of the symplectic space $\Omega=\mathbb{R}^{2n}$ and Darboux coordinates (Example \ref{Dar}) is the Moyal quantization of the Poisson algebra of the symplectic vector space.
\end{proposition}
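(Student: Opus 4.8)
The plan is to run the five steps of Hawkins' recipe listed above for the specific Poisson manifold $\Omega=\mathbb{R}^{2n}$ whose bracket, by Example \ref{Dar}, is the constant symplectic pairing $\langle \mathbf{f},\mathbf{g}\rangle=\mathbf{f}^T J\mathbf{g}$, and to check that each geometric ingredient is forced into the linear form that produces the Moyal star product. Since $\omega$ is nondegenerate, $\Omega$ is a symplectic manifold, and because $\mathbb{R}^{2n}$ is simply connected the source-simply-connected integrating symplectic groupoid is the pair groupoid $\Sigma=\mathrm{Pair}(\mathbb{R}^{2n})$, with arrow space $\mathbb{R}^{2n}\times\mathbb{R}^{2n}$ and multiplicative symplectic form $\omega\oplus(-\omega)$. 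As a first reduction I would record the groupoid isomorphism $(\mathbf{x},\mathbf{y})\mapsto(\mathbf{x}-\mathbf{y},\mathbf{y})$ identifying $\Sigma$ with the action groupoid of $\mathbb{R}^{2n}$ translating itself; this exhibits the difference coordinate $\mathbf{x}-\mathbf{y}$ as the abelian group that will carry the twisted group algebra, so that the convolution product of $\Sigma$ becomes a product on functions of a single variable in $\mathbb{R}^{2n}$.

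Next I would construct the prequantization $(\sigma,L,\nabla)$. Over a contractible base all bundles are trivial, so $L$ is the trivial line bundle, $\nabla$ is flat, and the only nontrivial datum is the multiplier $\sigma$, which by multiplicativity of $\omega$ is the groupoid $2$-cocycle obtained by integrating $\omega$ over the triangle spanned by a composable pair of arrows. For the linear symplectic structure this integral evaluates to the Heisenberg cocycle $\sigma(\mathbf{u},\mathbf{v})=\exp\!\big(\tfrac{i}{2\hbar}\langle \mathbf{u},\mathbf{v}\rangle\big)$, and I would verify the cocycle identity directly from the bilinearity and skew-symmetry of $\langle\cdot,\cdot\rangle$. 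I would then take the vertical polarization $\mathcal{P}$ of $T^*\mathbb{R}^{2n}\cong\Sigma$ and check the two compatibility requirements separately: that $\mathcal{P}$ is Lagrangian for $\omega\oplus(-\omega)$, and that it is multiplicative in the groupoid sense, both of which hold because it is a constant Lagrangian distribution aligned with the group direction. The half-form bundle is trivialized by the constant Lebesgue half-density, so no curvature correction survives.

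Finally I would compute the twisted, polarized convolution algebra $C^*_{\mathcal{P}}(\Sigma,\sigma)$. The polarization collapses the $4n$-dimensional arrow space onto the $2n$-dimensional group of differences, so polarized sections are functions on $\mathbb{R}^{2n}$ and the groupoid convolution reduces to the $\sigma$-twisted convolution
\begin{equation}
(f\star g)(\mathbf{u})=\int_{\mathbb{R}^{2n}} f(\mathbf{v})\,g(\mathbf{u}-\mathbf{v})\,\sigma(\mathbf{v},\mathbf{u}-\mathbf{v})\,d\mathbf{v}.
\end{equation}
This identifies $C^*_{\mathcal{P}}(\Sigma,\sigma)$ with the twisted group $C^*$-algebra $C^*(\mathbb{R}^{2n},\sigma)$, and the standard Weyl correspondence, under which the Fourier transform intertwines twisted convolution with the Moyal product, then exhibits this algebra as the Moyal quantization of the Poisson algebra of $(\mathbb{R}^{2n},\omega)$, completing the argument.

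The main obstacle I anticipate is the third step: showing that the vertical polarization is simultaneously symplectic-Lagrangian and groupoid-multiplicative, and that the induced half-form pairing reproduces exactly the Lebesgue measure together with the exact phase $\sigma$ appearing in the convolution. Any miscalibration of the cocycle or of the density normalization would deform the resulting star product away from the Moyal one, so the delicate point is bookkeeping the prequantization phase and the half-form correction precisely rather than up to an overall constant.
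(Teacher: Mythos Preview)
Your proposal is correct and follows essentially the same route as the paper: identify the integrating symplectic groupoid of $(\mathbb{R}^{2n},\omega)$ with the pair groupoid, pass to cotangent coordinates on $\mathbb{R}^{2n}\times\mathbb{R}^{2n}$, take the vertical polarization, read off the Heisenberg $2$-cocycle, and recognize the polarized twisted convolution algebra as $C^*(\mathbb{R}^{2n},\sigma)$, i.e.\ the Moyal quantization. The only cosmetic difference is that the paper makes the cotangent identification via the midpoint map $\Phi(x,y)=\big(\tfrac{1}{2}(x+y),\hat{\omega}(x-y)\big)$, which yields the Weyl kernel directly, whereas you first pass through the action-groupoid coordinates $(x,y)\mapsto(x-y,y)$ and then invoke the Weyl correspondence at the end; both produce the same $C^*$-algebra and the same star product.
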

\begin{proof}
In the particular case of symplectic manifold is a vector space $\Omega=\mathbb{R}^{2n}$ with symplectic form $\omega$, which is the context of the epistricted theories, we have the symplectic groupoid $\Omega \oplus \Omega^{*}$  integrating  the  symplectic vector space $\Omega$, where the multiplication is given by fiber addition on $\Omega^*=\{(p^1,p^2,\cdots , p^{2n})\}$, i.e. the symplectic integration comes equipped with Darboux coordinates.

More explicitly, $\hat{\omega}(u):v \mapsto \omega(u,v)$ gives a map $\hat{\omega}:\mathbb{R}^{2n} \to \mathbb{R}^{2n*}$. One obtains a symplectic structure
 $$\sigma((x,y),(z,w))=\omega(x,z)-\omega(y,w)$$ $$=\hat{\omega}(x-y)[\frac{z+w}{2}]-\hat{\omega}(z-w)[\frac{x+y}{2}].$$
 We identify $\mathbb{R}^{2n}\oplus \bar{\mathbb{R}}^{2n}$ with the cotangent bundle $T^*(\mathbb{R}^{2n})$ as follows: For the local coordinates of covectors $(u,\xi)$, $(v,\eta)$ in  $T^*(\mathbb{R}^{2n})$, the cotangent symplectic structure is $$\sigma^*((u,\xi),(v,\eta))=\xi(u)-\eta(v).$$ This gives us a symplectomorphism $\Phi:\mathbb{R}^{2n}\oplus \bar{\mathbb{R}}^{2n} \to T^*(\mathbb{R}^{2n})$ such that
 $$\Phi: (x,y)\mapsto (1/2(x+y),\hat{\omega}(x-y))$$ where $\Phi^*\sigma^*=\sigma$.  \footnote{ This example has also been studied by Hawkins (see example 6.2 \cite{H}).}

One can obtain the the Darboux coordinates $(q_1,\ldots, q_n, p_1,\ldots,p_n)$ of $T^*(\mathbb{R}^{2n})$ from the symplectomorphism $\Phi$. The projection of $T^*(\mathbb{R}^{2n})$ to $\mathbb{R}^{2n*}$ is a fibration of groupoids whose fibers are Lagragian. Thus this is a polarization of the symplectic groupoid given by
$$P=span\{\partial/\partial p_1,\ldots, \partial/\partial p_n \}$$
The symplectic potential which vanishes on $P$ can be chosen as $\theta_P= -p^i dq_i.$

This polarization gives us the half-form pairing, which enables quantizable observables to be represented as operators on the Hilbert space $L^2(\mathbb{R}^{2n})$.Hence, this yields the correspondence between the kernels of operators on $L^2(\mathbb{R}^{2n})$ and Weyl symbols of these operators. This kernel $T$ of an operator $f$ is given by
 $$Tf(p,q)=C\int f(\frac{p+q}{2},\zeta)e^{i\zeta (q-p)/\hbar} d\zeta.$$

 The quantization procedure gives the twisted group algebra $C^*(\Omega^*, \sigma)$ where $\sigma:\Omega^* \times \Omega^* \rightarrow \mathbb{T}$, $\sigma(x, y)=e^{\frac{-i}{\{q,p \}}}$. This is the usual Moyal quantization of a Poisson vector space (see \cite{R2}).  In this setting, the observables corresponds to functions in classical phase-space and the Moyal product of functions is derived from the product of pair of observables.  In this case, the position and momentum operators correspond to the generators of the Heisenberg group and they are related to each other by a Fourier transform.
 \end{proof}
 
 \begin{theorem}
 Quadrature quantum subtheories and the Moyal quantization from Proposition \ref{Moyal} coincide.
 \end{theorem}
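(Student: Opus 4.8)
The plan is to exhibit an explicit $*$-isomorphism between the algebraic structure underlying the quadrature quantum subtheory and the twisted group algebra $C^*(\Omega^*,\sigma)$ produced by the Moyal quantization in Proposition \ref{Moyal}, under which the distinguished position/momentum generators on the two sides are identified. Since Proposition \ref{Moyal} already identifies the Hawkins quantization of $(\Omega,\omega)$ with $C^*(\Omega^*,\sigma)$, the content of the theorem reduces to recognizing the quadrature subtheory as the same Weyl (CCR) algebra.

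First I would make the algebraic content of the quadrature subtheory explicit using Example \ref{Dar}. There the quadrature observables are the real-linear functionals $f\leftrightarrow\mathbf{f}\in\mathbb{R}^{2n}$, with Poisson bracket equal to the symplectic form, $[f,g]_{PB}=\langle\mathbf{f},\mathbf{g}\rangle$. Upon quantization each $f$ is promoted to a self-adjoint operator $\hat{X}_{\mathbf{f}}$, and because $\langle\mathbf{f},\mathbf{g}\rangle$ is constant, property (1) and property (2) of the prequantization force the canonical commutation relation $[\hat{X}_{\mathbf{f}},\hat{X}_{\mathbf{g}}]=i\hbar\,\langle\mathbf{f},\mathbf{g}\rangle$ up to the normalization fixed by property (2). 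Exponentiating, the Weyl operators $W(\mathbf{f})=e^{i\hat{X}_{\mathbf{f}}}$ satisfy $W(\mathbf{f})W(\mathbf{g})=\sigma(\mathbf{f},\mathbf{g})\,W(\mathbf{f}+\mathbf{g})$ with the two-cocycle $\sigma(\mathbf{f},\mathbf{g})=e^{-\frac{i}{2}\langle\mathbf{f},\mathbf{g}\rangle}$. This realises the quadrature subtheory as the CCR algebra generated by the $W(\mathbf{f})$. Matching with Proposition \ref{Moyal} is then immediate on generators: that proposition delivers precisely the twisted group algebra $C^*(\Omega^*,\sigma)$ with the same cocycle, realised as phase-space functions under the Moyal product and with position and momentum appearing as the Heisenberg-group generators related by Fourier transform, so sending $W(\mathbf{f})$ to the group element $\mathbf{f}\in\Omega^*$ intertwines the Weyl relation with the twisted convolution product. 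Completeness of the quadratures (condition (4) of the definition of quantization) guarantees that these generators exhaust the algebra, so the resulting map is a surjective $*$-isomorphism.

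The main obstacle I anticipate is not this algebra of observables but verifying that the full operational data of the subtheory are carried faithfully across the isomorphism, so that the two \emph{theories}, and not merely their observable algebras, coincide. Concretely, one must check that a Lagrangian subspace equipped with a valuation $v$ (the state data of Table 2) corresponds to a pure Gaussian state on $C^*(\Omega^*,\sigma)$ through its Wigner function, and that an affine symplectic transformation $\mathbf{m}\mapsto S\mathbf{m}+\mathbf{a}$ acts on $C^*(\Omega^*,\sigma)$ by the metaplectic-plus-displacement automorphism induced by $S$ and $\mathbf{a}$. Establishing this compatibility amounts to showing that the affine symplectic group acts by $\sigma$-preserving automorphisms of the twisted group algebra and that the epistricted Lagrangian-plus-valuation states are exactly the positive functionals supported on the quadrature sector; this is where the care is required, and once it is in place the coincidence of the two descriptions follows.
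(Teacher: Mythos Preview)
Your proposal is correct in outline and arrives at the same conclusion, but it takes a genuinely different route from the paper's own argument.

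The paper works throughout in Spekkens' operational language: observables are cast as projector-valued measures rather than Hermitian operators, the full family of quadrature observables is \emph{constructed} by first fixing the position PVM $\mathcal{O}_q$ and then transporting it via a projective unitary representation $\hat V$ of the symplectic group, $\hat\Pi_f(\mathbf f)=\mathcal V(S_f)(\hat\Pi_q(\mathbf f))$. Commuting families of PVMs are then labelled by isotropic subspaces, and the match with Moyal quantization is made at the level of states: the half-form pairing is read off from the integral kernel of the projector $\hat\Pi_f$, which has Weyl symbol $f$, and this kernel/Weyl-symbol correspondence is what yields the Moyal product. By contrast, you bypass PVMs entirely and go straight to the Weyl form of the CCR: exponentiate the linear observables $\hat X_{\mathbf f}$, read off the cocycle $\sigma(\mathbf f,\mathbf g)=e^{-\frac{i}{2}\langle\mathbf f,\mathbf g\rangle}$ from the Weyl relation, and recognise the resulting CCR algebra as $C^*(\Omega^*,\sigma)$ on the nose.

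Your approach is the cleaner $C^*$-algebraic argument and makes the identification of the two algebras essentially a one-line consequence of the Weyl relations; it also isolates precisely where extra work is needed (states and the affine symplectic action), which the paper handles only implicitly via the Wigner/Weyl-symbol correspondence. The paper's approach, on the other hand, buys closer contact with the epistricted-theory formalism of \cite{S}: by building quadrature observables from the symplectic representation acting on the position PVM, it makes the role of the vertical polarization and of the affine symplectic transformations visible inside the proof, and it identifies the \emph{states} $\hat\Pi_V(\mathbf v)$ directly rather than leaving them as a separate verification. Neither argument is fully rigorous as written, but they emphasise complementary aspects: yours the algebra, the paper's the operational content.
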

 \begin{proof}

To be consistent with the formalism of \cite{S}, we work with projector valued measures (PVM) rather than Hermitian operators. PVMs are used in quantum information and quantum foundations to represent measurements as eigenvalues of Hermitian operators are operationally insignificant and serve as labels of outcomes. A \emph{projector-valued measure} with outcome set $K$ is a set of projectors $\{\Pi_k: k\in K \}$ such that $\Pi_k^2=\Pi_k$, $\forall k \in K$ and $\sum_k \Pi_k=I$. Hence the position (momentum) observable are the set of projectors onto position (momentum) eigenstates\footnote{In the continuous case one can also use Hermitian operators corresponding to the real valued functionals but the commutation relation of Hermitian operators does not have finite counterpart Therefore, Spekkens preferred to use PVMs in order to cover finite and continuous cases simultaneously.}:
 $$\mathcal{O}_q=\{\hat{\Pi}_q(\mathbf{q}): \mathbf{q}\in \mathbb{R}\}$$
 where $$\hat{\Pi}_q(\mathbf{q})=| \mathbf{q}\rangle _q \langle \mathbf{q}|.  $$

We now define a unitary representation of symplectic affine transformation to introduce the other quadrature observables. The projective unitary representation $\hat{V}$ of the symplectic group acting on the phase space $\Omega$ satisfies $\hat{V}(S)\hat{V}(S')=e^{i\phi}\hat{V}(S S')$ for every symplectic matrix $S:\Omega \to \Omega$ and where $e^{i\phi}$ is a phase factor. The action of this unitary is defined by the conjugation
$$\mathcal{V}(S)(\cdot)=\hat{V}(S)(\cdot)\hat{V}^{\dag}(S).$$
For single degree of freedom, let $S_f$ be the symplectic matrix that takes the position functional $q$ to a quadrature functional $f$ such that $S_f\mathbf{q}=\mathbf{f}$. Then the \emph{quadrature observable} associated with $f$ is defined as follows
 $$\mathcal{O}_f=\{\hat{\Pi}_f(\mathbf{f}): \mathbf{f}\in \mathbb{R}\}$$
 where
 $$\hat{\Pi}_f(\mathbf{f})=\mathcal{V}(S_f) (\hat{\Pi}_q(\mathbf{f})).$$

 For the $n$ degrees of freedom $\Omega=\mathbb{R}^{2n}$, the quadrature observable associated with $f$ is given by $$\mathcal{O}_f=\{\hat{\Pi}_f(\mathbf{f}): \mathbf{f}\in \mathbb{R}^{2n} \}$$ where $$\hat{\Pi}_f(\mathbf{f})=\mathcal{V}(S_f) (I\otimes \cdots\otimes \hat{\Pi}_{q_i}(\mathbf{f})\otimes \cdots \otimes I)$$ for $S_f\mathbf{q}_i=\mathbf{f}$. We also know that the set of quadrature observables $\{\mathcal{O}_{f_i}\}$ commute if and only if the corresponding functionals $\{f_i\}$ are Poisson-commuting (see \cite{S1}). Hence, the commuting set of quadrature observables can be labelled by isotropic subspaces of $\Omega$. This set defines a single quadrature observable $$\mathcal{O}_{V}=\{\hat{\Pi}_V(\mathbf{v}):\mathbf{v}\in V\}$$ where $$\hat{\Pi}_V(\mathbf{v})=\prod_{\mathbf{f}^{(i)}}\hat{\Pi}_{f^{(i)}}(f^{(i)}\mathbf{v}).$$

 On the other hand, in the geometric quantization procedure, any functional $f$ on $\Omega$ is mapped to a hermitian operator $\hat{f}$ in a prequantum Hilbert space which corresponds to the observable $\mathcal{O}_f=\{\hat{\Pi}_f(\mathbf{f}): \mathbf{f}\in \mathbb{R}^{2n}\}$.  Moreover,  the commutation relation for the observables in both quadrature subtheories and geometric quantization, is implied by the Poisson commutation relation of the classical observables. As the polarization is the commuting set of these hermitian operators, the state that is obtained after quantization is the operator $\hat{\Pi}_V(\mathbf{v})$.The choice of the vertical polarization for the groupoid $\Omega \oplus \Omega^*$ is the responsible of the correspondence between the two quantum states. The half-form pairing defined above can be computed in terms of the integral kernel of the projection operator $\hat{\Pi}_f$, which has Weyl symbol $f$. This establishes a correspondence between phase-space formalism and quantum mechanics, and Moyal product is deduced from this correspondence.
 \end{proof}

 In \cite{S1}, the operational equivalence quantum subtheories and epistricted theories is proven using Wigner representation which maps operators in Hilbert space to the functions in phase-space formulation of quantum mechanics. It is also well-known fact that the Wigner representation of an operator product is given by the Moyal product. As a result, geometric quantization with an appropriate choice of polarization is operationally equivalent to epistricted theories. We can also conclude that group algebra $C^*(H)=C^*(\Omega^*,\sigma)$, which is the Hilbert space considered as a group representation of the Heisenberg group $H$, contains the algebraic structure of quadrature subtheories.

This discussion leads to the following theorem:
\begin{theorem}[Main result in the continuous case]
The geometric quantization, via Hawkins' symplectic groupoid approach, of the Spekkens toy theory of continuous degrees of freedom produces a $C^*$-algebra that is a group representation for the Heisenberg group $H$ and it encodes the algebraic structure of the quadrature subtheories, via Moyal quantization.
\end{theorem}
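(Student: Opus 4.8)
The plan is to obtain this theorem as a synthesis of the two preceding results together with one standard identification from the theory of twisted convolution algebras. First I would invoke Proposition \ref{Moyal} to fix the output of the construction: the Hawkins quantization of $(\mathbb{R}^{2n},\omega)$, carried out through its symplectic integration $\Omega\oplus\Omega^*$ and the vertical polarization $P=\mathrm{span}\{\partial/\partial p_1,\dots,\partial/\partial p_n\}$, produces the twisted convolution algebra $C^*(\Omega^*,\sigma)$ with cocycle $\sigma(x,y)=e^{-i/\{q,p\}}$, and this algebra is precisely the Moyal quantization of the Poisson vector space. This already delivers the ``via Moyal quantization'' clause and the integral kernel $Tf$ expressing the half-form pairing.

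Second, I would establish the Heisenberg-group assertion. The key observation is that the two-cocycle $\sigma$ built from the symplectic form $\omega$ is exactly the cocycle defining the Heisenberg group $H$ as a central extension of $\Omega^*\cong\mathbb{R}^{2n}$ by $\mathbb{T}$. By the general correspondence between twisted group $C^*$-algebras and representations of central extensions, $C^*(\Omega^*,\sigma)$ is then identified with the summand of $C^*(H)$ on which the center acts by the prescribed character, i.e.\ with a representation of $H$. Under this identification the position and momentum operators are realized as the one-parameter subgroups generated by $q$ and $p$, related by the Fourier transform, as already noted at the close of the proof of Proposition \ref{Moyal}.

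Third, I would import the algebraic content of the quadrature subtheories through the coincidence theorem proved just above. That result matches each quadrature projector $\hat{\Pi}_f(\mathbf{f})$ with a Weyl symbol equal to the functional $f$, and matches the commuting family $\hat{\Pi}_V(\mathbf{v})$ indexed by an isotropic subspace $V$ with the polarized states of the groupoid quantization. Since the Moyal product of symbols corresponds to the operator product under the Wigner map, these commuting families and their outcome labels are encoded exactly as multiplication inside $C^*(\Omega^*,\sigma)\cong C^*(H)$, which is the ``encodes the algebraic structure of the quadrature subtheories'' clause.

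The step I expect to be the main obstacle is the second one: making the passage from $C^*(\Omega^*,\sigma)$ to a genuine Heisenberg representation rigorous rather than formal. One must check that the cocycle coming from $\omega$ is cohomologous to the standard Heisenberg cocycle and verify the normalization of the central character, after which the Stone--von Neumann theorem pins the representation down up to unitary equivalence. The remaining continuous-variable subtleties---unbounded $q$ and $p$, the correct $C^*$-completion of the convolution algebra, and the precise analytic sense in which the half-form pairing reproduces the kernel $Tf$---are where the otherwise routine identifications conceal the real work.
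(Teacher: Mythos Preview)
Your proposal is correct and follows essentially the same route as the paper: the theorem is stated there not with a separate proof environment but as a corollary of the discussion immediately preceding it, which combines Proposition \ref{Moyal} (Hawkins' quantization yields $C^*(\Omega^*,\sigma)$, the Moyal quantization), the coincidence theorem (quadrature subtheories agree with Moyal quantization), and the bare assertion $C^*(H)=C^*(\Omega^*,\sigma)$. Your three-step synthesis reproduces this exactly; the only difference is that you flag the cocycle identification and the Stone--von Neumann uniqueness as places requiring care, whereas the paper simply writes the equality $C^*(H)=C^*(\Omega^*,\sigma)$ and moves on.
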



\subsection{Functoriality}

The functoriality of geometric quantization is a delicate issue and it has been proven that the quantization that fits with the Schr\"oedinger picture is in fact not functorial. There are several problems even before quantization, in particular, that the symplectic category is not quite a category, since the composition of Lagrangian correspondences is not in general well defined, and also that when it is defined, the composition is not continuous with the standard topology in the Lagrangian Grassmanian. The failure of geometric quantization to functorially represent Schroedinger's picture is given e.g. in Gotay's work \cite{Go}.

However, the geometric quantization picture for symplectic groupoids turns out to be functorial with respect to the choices, i.e. the polarizations (the groupoid one), the half line bundle. The fact that the choices of polarizations are affine means that there is a higher structure for our C*-algebra quantization, namely, the objects are symplectic manifolds, 1-morphisms are Lagrangian polarizations and 2-morphisms are affine transformations between Lagrangian polarizations. These 2-morphims are reflected in C*-algebra automorphisms after quantization.



\section{Finite degrees of freedom}

 We now discuss how the geometric quantization relates the epistricted theories to quadrature quantum subtheories for odd-prime discrete degrees of freedom. In \cite{S}, the operational equivalence of these two theories for continuous and odd-prime discrete cases was proven using Wigner representation. Here, we aim to construct a functor from a subcategory of the category of groupoids to the category of $C^*$-algebras. This corresponds to a functor from Frobenius algebras in the category \textbf{FRel} (Frobenius algebras in the category of sets and relations) to Frobenius algebras in the category of Hilbert spaces \textbf{FHilb}. Here is the sketch of our discrete quantization:
 \begin{itemize}
 \item We start with the special dagger Frobenius algebra of epistricted theories, \textbf{Spek}, which is a subcategory of finite sets and relations, \textbf{FRel}.
 \item We then construct the groupoid $\mathcal{G}$ corresponding to \textbf{Spek} via the explicit equivalence in Heunen et. al.\cite{CCH}.
 \item We next obtain the pair groupoid from $\mathcal{G}$ and introduce the symplectic structure on it which is compatible with the pair groupoid structure. In this case, each polarization corresponds to a Lagrangian subspace in epistricted theories.
 \item We then apply geometric quantization procedure a la Hawkins on the pair groupoid by considering the complex valued function space on the groupoid and using discrete Fourier transform (integral kernel) defined by Gross \cite{G}.
 \item Finally, we end up with the finite dimensional $C^*$-algebra from which one can construct special dagger Frobenius algebra over \textbf{FHilb} via \cite{V}.
 \end{itemize}

  We begin this section by reviewing the epistricted theories in discrete case.

\subsection{Quadrature Epistricted Theories}
The formalism in finite case is defined over the finite fields with prime order $d$. These fields are isomorphic to the integers modulo $d$, denoted by $\mathbb{Z}_d$. Hence, the configuration space and associated phase-space are $(\mathbb{Z}_d)^n$, $\Omega = (\mathbb{Z}_d)^{2n}$, respectively. The linear functionals are also in the form:
$$f=\mathbf{a_1}q_1+\mathbf{b_1}p_1+\ldots + \mathbf{a_n}q_n + \mathbf{b_n}p_n+ \mathbf{c}$$ where $\mathbf{a_1}, \mathbf{b_1},\ldots, \mathbf{a_n},\mathbf{b_n},c\in \mathbb{Z}_d$. Hence, a vector $\mathbf{f}=(\mathbf{a_1}, \mathbf{b_1},\ldots, \mathbf{a_n},\mathbf{b_n})$ specifies the position and momentum dependence of the quadrature functional $f$. The dual space $\Omega ^*= (\mathbb{Z}_d)^n$ consists of these vectors associated with the functionals. The Poisson bracket, unlike the continuous case, is defined in terms of finite differences:
\begin{definition} The Poisson bracket in the finite case is given by
$$[f,g]_{PB}(\mathbf{m})=\sum_{i=1}^n[(f(\mathbf{m}+\mathbf{q}_i)-f(\mathbf{m}))(g(\mathbf{m}+\mathbf{p}_i)-g(\mathbf{m}))-(f(\mathbf{m}+\mathbf{p}_i)-f(\mathbf{m}))(g(\mathbf{m}+\mathbf{q}_i)-g(\mathbf{m}))],$$
where the operations are in modulo $d$. The Poisson bracket, $[f,g]_{PB}(\mathbf{m})$, is also equal to symplectic inner product $\langle \mathbf{f}, \mathbf{g} \rangle$ on the discrete phase space.
\end{definition}

Like in the continuous case, an epistemic state is determined by the set of quadrature variables that are known to that agent and the values of these variables. This corresponds to the pair $(V^*,\mathbf{v})$, where $V^*$ is an isotropic subspace of the phase space $\Omega^*$ and  $\mathbf{v}$ is a valuation vector in $V^{**}=V$. Similarly, the valid transformations are symplectic transformations which preserve the symplectic inner product and they form the affine symplectic group over the finite field $\mathbb{Z}_d$. Note that these transformations over a finite field are discrete in time; hence, they cannot be generated from a Hamiltonian unlike the continuous case.

\begin{example}
As an example, we consider the quadrature epistricted theory of trits \cite{S} for a single system. The configuration space and the phase space are $\mathbb{Z}_3$ and $\mathbb{Z}_3^{2}$, respectively. The quadrature functionals in this system are of the form $f=\mathbf{a}q+\mathbf{b}p+\mathbf{c}$ where $\mathbf{a}$, $\mathbf{b}$, $\mathbf{c}\in \mathbb{Z}_3$. There are four inequivalent quadrature functionals:
$$q,p,,q+p, q+2p.$$
Since none of these functionals Poisson-commute, an agent can know at most one of them. This implies that there are twelve epistemic state as the valuation vectors are chosen from $V=\mathbb{Z}_3$. These states can be depicted as the following $3\times 3$ grids:
\begin{center}
\includegraphics[scale=0.5]{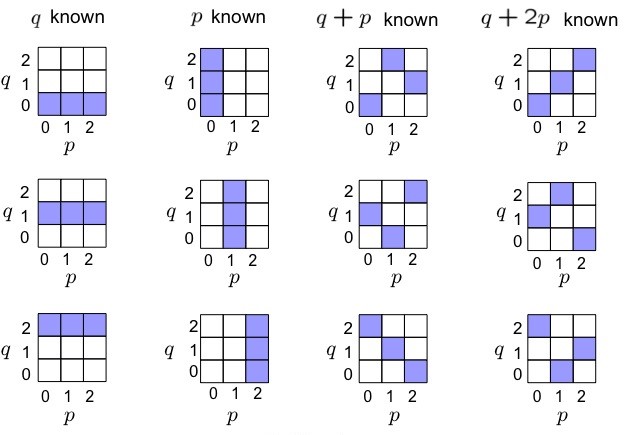}
\end{center}

The valid transformations, which forms affine symplectic group over $\mathbb{Z}_3$, correspond to a certain subset of permutations of the functionals. Here is an example:
\begin{center}
\includegraphics[scale=0.5]{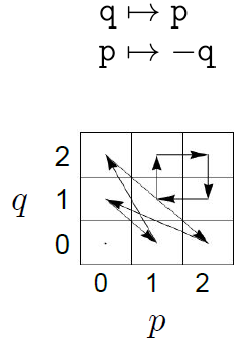}
\end{center}

\end{example}

\subsection{The Category of epistricted theories}
We now turn to the category of the epistricted theory of trits. The arguments can easily be generalized to the epistricted theories for other odd primes.We start with the category of \textbf{FRel} whose objects are sets and whose morphisms $X\to Y$ are relations $r\subseteq X \times Y$ and $s\circ r =\{(x,z)|\exists y, (x,y)\in r,(y,z)\in s\}$. \textbf{FRel} is a dagger symmetric monoidal category when the tensor product is chosen as a cartesian product, the single element set $1=\{\bullet \}$ as the identity and the relational converse as the dagger morphism $\dagger$.
\begin{definition}
An object $X$ in \textbf{FRel} with a morphism $m:X \times X \to X$ is called \emph{special dagger Frobenius algebra} if and only if $m$ has the following properties:
\begin{itemize}
\item $(1 \times m)\circ (m^{\dagger}\times 1)=m^{\dagger}\circ m=(m\times 1)\circ (1\times m^{\dagger})$ (F)
\item $m\circ m^{\dagger}=1$ (M)
\item $m\circ(1 \times m)=m\circ (m\times 1)$ (A)
\item there is $e:1\to X$ with $m\circ (e\times 1)=1=m\circ (1\times e)$ (U).
\end{itemize}
\end{definition}
The conditions of Frobenius algebras presented graphically in Figure 1. These diagrams encode composition by drawing morphisms on top of each other, and monoidal product is drawing morphism next to each other. The dagger is a vertical reflection.

\begin{figure}[h]
   \centering
   \def\svgwidth{150 pt}
  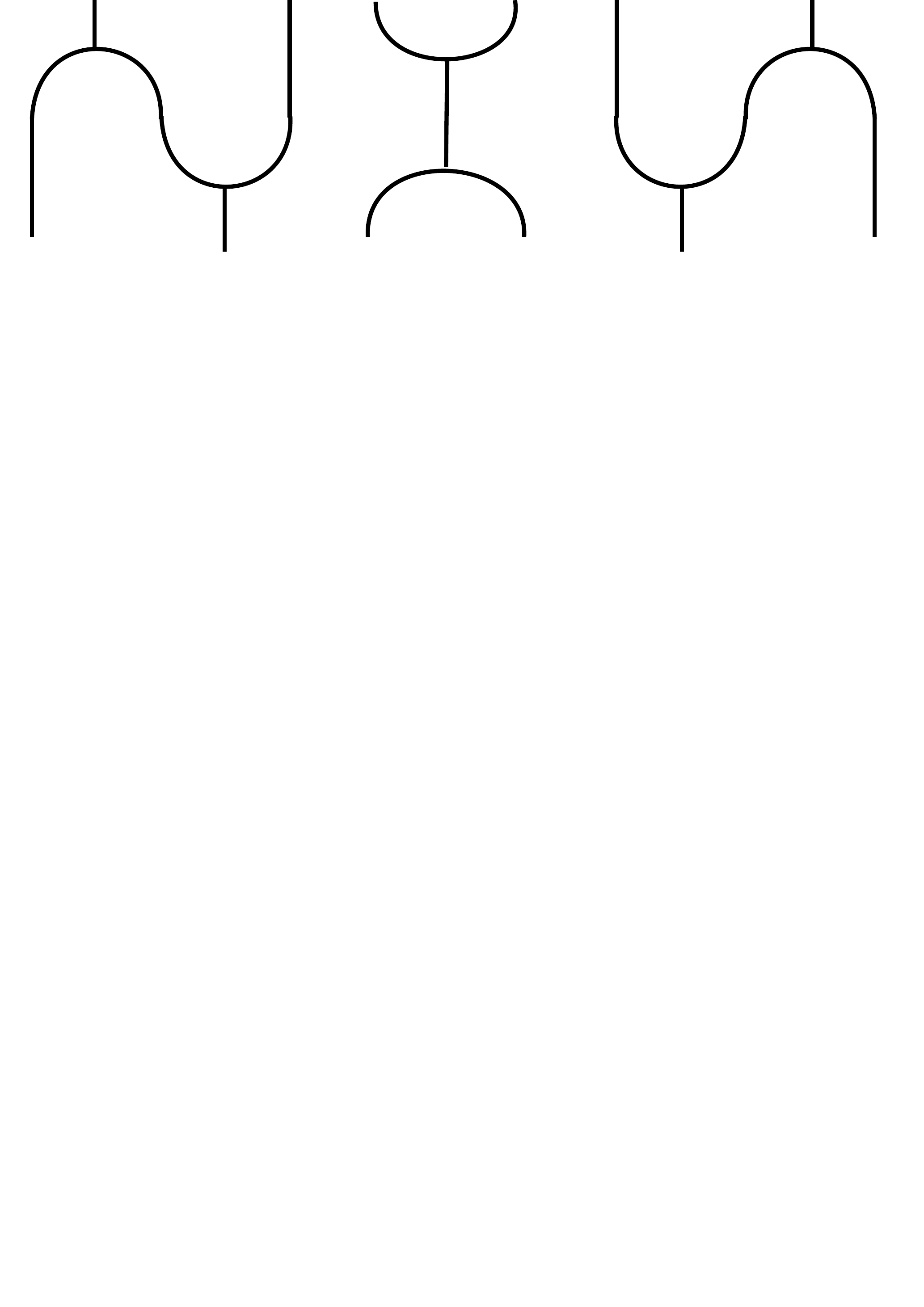
  \caption{String diagrams of the properties for the objects in \textbf{FRel}}
    \label{fig:Hol}
\end{figure}

The category \textbf{FRel} has morphisms $\eta:1 \to X \times X$ satisfying
\begin{itemize}
\item $(\eta^{\dagger}\times 1)\circ (1 \times \eta)=1=(1\times \eta^{\dagger}\circ (\eta \times 1))$(C).
\end{itemize}
\begin{proposition}(\cite{CCH}). \textbf{FRel} is a \emph{compact closed category}. 
 \end{proposition}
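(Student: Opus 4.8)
The plan is to exhibit, for every object of \textbf{FRel}, an explicit dual object together with its cup and cap morphisms, and then to verify the two snake (zigzag) equations. Since the excerpt has already established that \textbf{FRel} is a dagger symmetric monoidal category, compact closure amounts precisely to producing such duals. The key observation is that every object is self-dual: for a finite set $X$ I would set $X^{*}=X$ and take as coevaluation the diagonal relation
$$\eta_X = \{(\bullet,(x,x)) : x \in X\} \subseteq 1 \times (X \times X),$$
with evaluation $\epsilon_X := \eta_X^{\dagger} = \{((x,x),\bullet) : x \in X\}$ given by the relational converse. These are exactly the morphisms $\eta:1\to X\times X$ introduced above, so it remains to check that they satisfy condition (C), which is nothing but the two snake identities with $\epsilon_X=\eta_X^{\dagger}$.

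Next I would verify those identities by direct relational composition. For the first, consider the composite $(\epsilon_X \times 1_X)\circ(1_X \times \eta_X): X \to X$. Tracing an element $x\in X$ through $1_X\times\eta_X$ produces all triples $(x,x',x')$ with $x'\in X$; applying $\epsilon_X\times 1_X$ then forces the first two coordinates to agree, i.e. $x=x'$, and returns the third coordinate, leaving exactly $x\mapsto x$. Hence the composite equals the identity relation $1_X$. The second identity $(1_X\times\epsilon_X)\circ(\eta_X\times 1_X)=1_X$ follows by the mirror-image computation, using the symmetry of the cartesian product. This establishes (C) and therefore the compact closed structure.

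Finally, I would record that, because $\epsilon_X=\eta_X^{\dagger}$ is compatible with the symmetry isomorphism, the resulting structure is in fact \emph{dagger compact}, which is the form in which it is used for the toy theory in the sequel. The only bookkeeping subtlety is that the monoidal product in \textbf{FRel} is the cartesian product, which is associative and unital only up to canonical isomorphism; the snake computations above should therefore be read as taking place after inserting the coherent associators and unitors (equivalently, in the strictification of \textbf{FRel}), where they reduce to the elementary relational calculation just described. I expect this coherence bookkeeping, rather than the relational composition itself, to be the only place demanding care.
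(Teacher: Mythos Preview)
Your argument is correct and is the standard direct verification that \textbf{FRel} is (dagger) compact closed via the diagonal relation. The paper itself does not supply a proof; it cites \cite{CCH} and adds only the one-line remark that ``the compact structure can be induced from the Frobenius algebra by $\eta=m^{\dagger}\circ e$.'' That remark points to a slightly different route: rather than writing down the diagonal explicitly, one uses the general fact that a special dagger Frobenius object is self-dual with cup $m^{\dagger}\circ e$, together with the observation that every object of \textbf{FRel} carries the trivial (discrete groupoid) Frobenius structure. For that structure the formula $m^{\dagger}\circ e$ reproduces exactly your diagonal $\eta_X$, so the two descriptions coincide. Your direct relational computation has the virtue of being self-contained and not presupposing the Frobenius machinery, while the paper's formulation makes explicit the link to the Frobenius algebras that drive the rest of the section.
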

 The compact structure can be induced from the Frobenius algebra by $\eta=m^{\dagger}\circ e$. As a result of compact structure, we can define transposes of morphism $r:X \to Y $ by $\lceil r \rceil =(1 \times r)\circ \eta : 1\to X \times Y.$ The category of Frobenius algebras in \textbf{FRel} with the following morphism is a well-defined category (see Proposition 14 \cite{CCH}).

\begin{definition}
A morphism $(X,m_X)\to (Y, m_Y)$ in the category of Frobenius algebras in \textbf{FRel}is a morphism $r:X \to Y$ satisfying
\begin{itemize}
\item $(m_X \times m_Y)\circ (1 \times \sigma \times 1)\circ (\lceil r \rceil \times \lceil r \rceil)=\lceil r \rceil$
\item $(r \times \eta^{\dagger})\circ (m_X^{\dagger}\times 1)\circ (e_X \times 1)= (e_Y^{\dagger}\times 1)\circ (m_Y \times 1)\circ (r \times \eta)$
where $\sigma:X\times Y \to Y\times X$ is a natural swap isomorphism.
\end{itemize}
\end{definition}

\begin{figure}[h]
    \centering
    \def\svgwidth{150 pt}
    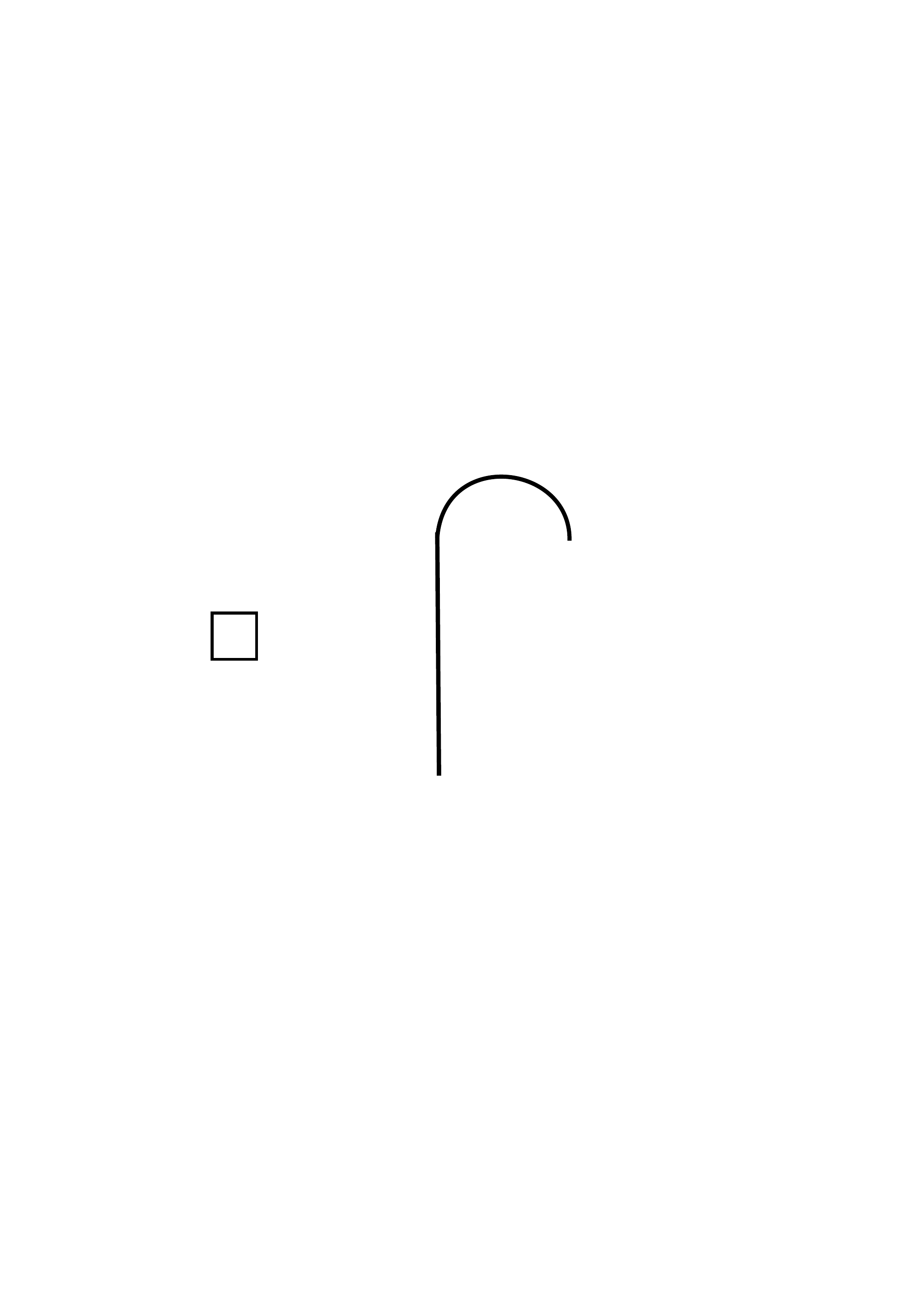
  \vspace*{-.5 cm}  \caption{String diagram of the properties for the morphisms in \textbf{Frel}}
    \label{fig:Hol}
\end{figure}

\begin{proposition}
The category \textbf{Spek} for the toy theory of trits is a subcategory of \textbf{FRel}.
\end{proposition}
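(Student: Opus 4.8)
The plan is to establish the three conditions defining a subcategory: that every object of \textbf{Spek} is a special dagger Frobenius algebra carried by an object of \textbf{FRel}, that every generating morphism of \textbf{Spek} is both a relation and a morphism of Frobenius algebras in the sense of the definition above (conditions (R), (I) and (C)), and that this class of morphisms is closed under relational composition and contains the identities. Since \textbf{Spek} is generated under the monoidal product by the single-trit system, it suffices to treat the elementary system and then appeal to the fact that tensor products of special dagger Frobenius algebras, and of their morphisms, remain in the relevant class.

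First I would identify the objects. The underlying set of the $n$-trit system is the phase space $\mathbb{Z}_3^{2n}$, a finite set and hence an object of \textbf{FRel}. Its toy observables are the objects of \textbf{Spek}: by the categorical description of \cite{CE, CES} each observable is a special dagger Frobenius algebra, and by the classification \cite{CCH} each corresponds to a finite groupoid on the set $\mathbb{Z}_3^{2n}$. For a single trit the four quadrature observables $q,p,q+p,q+2p$ each have three outcomes and correspond to the groupoid that is a disjoint union of three copies of $\mathbb{Z}_3$, with multiplication relation
$$m=\{\,((\mathbf{x},\mathbf{y}),\,\mathbf{x}+\mathbf{y})\,\}$$
on each component; checking (F), (M), (A) and (U) is then the standard fact \cite{CCH} that any finite groupoid determines a special dagger Frobenius algebra in \textbf{FRel}. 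As tensor powers of these yield the $n$-trit observables, every object of \textbf{Spek} lies in \textbf{FRel}.

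Next I would treat the morphisms. The generators are the valid transformations, namely the elements $\mathbf{m}\mapsto S\mathbf{m}+\mathbf{a}$ of the affine symplectic group over $\mathbb{Z}_3$, together with the state and effect relations. Each transformation is a bijection of $\mathbb{Z}_3^{2n}$, so its graph is a single-valued, total relation and in particular a morphism in \textbf{FRel}. To see that it is a morphism of Frobenius algebras I would verify (R) and (I) on its graph: because $S$ is symplectic it maps the outcome-components of the source groupoid bijectively to those of the target and restricts to an additive bijection on each, which is exactly a groupoid morphism and yields the transpose identity (R), while the translation by $\mathbf{a}$ is accommodated by the identity (I); condition (C) is automatic, being the snake equation of the compact structure that \textbf{FRel} carries by \cite{CCH}. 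Concretely I would reduce each string-diagram equation to a statement about addition in $\mathbb{Z}_3^{2n}$, using the explicit cup $\eta=m^\dagger\circ e$.

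Finally, closure and identities follow formally: the composite of two affine symplectic maps is again affine symplectic and relational composition of graphs coincides with composition of maps, while the identity relation on each $\mathbb{Z}_3^{2n}$ is the graph of the identity transformation and satisfies (R), (I), (C) trivially; hence \textbf{Spek} is a subcategory. I expect the principal obstacle to be the verification of (R) for the transformations. Unlike the groupoid axioms it is a genuinely relational identity, in which the converse relation introduced by the dagger must be shown to interact correctly with the comultiplication $m^\dagger$; the argument therefore has to combine the additivity and bijectivity of $S$ rather than reasoning pointwise as one would for an ordinary function, and the translation part must be checked not to spoil the comonoid compatibility.
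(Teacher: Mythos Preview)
Your proposal targets a different statement than the one being proved. The proposition asserts only that \textbf{Spek} is a subcategory of \textbf{FRel}, the dagger compact category of finite sets and relations. The paper's proof is accordingly almost definitional: it \emph{constructs} \textbf{Spek} inside \textbf{FRel} by declaring its objects to be the singleton $1$ and the cartesian powers of the nine-element set $IX=\{1,\dots,9\}$, and its morphisms to be those generated by specific relations $e:IX\to 1$, $m:IX\to IX\times IX$ (given by an explicit $9\times 9$ table), the affine symplectic permutations $\sigma_i$, and the structural isomorphisms. Since all of these are finite sets and relations, the subcategory claim is immediate; the only substantive check the paper records is that $(IX,m,e)$ is a special dagger Frobenius algebra.

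You instead set out to prove that \textbf{Spek} is a subcategory of the category of \emph{Frobenius algebras} in \textbf{FRel}: you take the objects to be Frobenius algebras (one per quadrature observable, realized as a disjoint union of copies of $\mathbb{Z}_3$) and then labor to verify the Frobenius-morphism conditions (R) and (I) for the affine symplectic transformations. None of this is required for the stated proposition, and it also mismatches the paper's setup: in the paper there is a single underlying object $IX$ per trit, with one distinguished Frobenius structure $(IX,m,e)$ corresponding to the ``$q$ known'' observable, the other observables being obtained by conjugating $m$ with permutations rather than by introducing new objects. Your anticipated ``principal obstacle'' of checking (R) for affine symplectic maps is therefore an obstacle to a different theorem; for the proposition at hand the work reduces to writing down the generating relations and verifying (F), (M), (A), (U) for the explicit $m$ and $e$.
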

\begin{proof}
The category \textbf{Spek} for the toy theory of trits is defined as the category  whose objects are the single element $1$ and $n$-fold cartesian product of nine element set $IX:=\{1,2,\ldots, 9\}$. The morphisms of \textbf{Spek} can be constructed by composition, Cartesian product and relational converse from the following relations:
\begin{itemize}
\item The unit(deleting) relation $e:IX \to 1$ defined by $\{1,4,7\}\sim \bullet$
\item The relation $m: IX \to IX \times IX$ defined as:
\newline
\begin{center}
\begin{tabular}{|c|c|c|c|c|c|c|c|c|}
  \hline
  1 & 2 & 3 &   &   &   &   &   &   \\ \hline
  3 & 1 & 2 &   &   &   &   &   &    \\ \hline
  2 & 3 & 1 &   &   &   &   &   &   \\ \hline
    &   &  & 4 & 5 & 6 &   &   &   \\ \hline
    &   &   & 6 & 4 & 5 &   &   &    \\ \hline
    &   &   & 5 & 6 & 4 &   &   &    \\ \hline
    &   &   &   &   &   & 7 & 8 & 9  \\ \hline
    &   &   &   &   &   & 9 & 7 & 8  \\ \hline
    &   &   &   &   &   & 8 & 9 & 7 \\
  \hline
\end{tabular}
\end{center}
For example, $\{1\}\sim \{(1,1), (2,2), (3,3)\}$, $\{2\}\sim \{(1,2), (2,3), (3,1)\}$ etc.
\item The permutations $\sigma_i:IX \to IX$ that correspond to affine symplectic maps on the phase-space.
\item The relevant unit, associativity and symmetry natural isomorphisms.
\end{itemize}

Twelve epistemic states for a single system is given by the following relations:
\begin{center}
$q$ known: $\bullet \sim \{1,2,3\}$, $\bullet \sim \{4,5,6\}$, $\bullet \sim \{7,8,9\}$.

$p$ known: $\bullet \sim \{1,4,7\}$, $\bullet \sim \{2,5,8\}$, $\bullet \sim \{3,6,9\}$.

$p+q$ known: $\bullet \sim \{1,6,8\}$, $\bullet \sim \{2,4,9\}$, $\bullet \sim \{3,5,7\}$.

$p+2q$ known: $\bullet \sim \{1,5,9\}$, $\bullet \sim \{2,6,7\}$, $\bullet \sim \{3,4,6\}$.
\end{center}

It is straightforward to verify that $(IX,m,e)$ is the special dagger Frobenius algebra. 
\end{proof}
\begin{remark}
This structure correspond to the observable for which $q$ is known. Hence, the relations  $\bullet \sim \{1,2,3\}$, $\bullet \sim \{4,5,6\}$, $\bullet \sim \{7,8,9\}$ are the \emph{copyable (classical)} states for this observable. The other observables can be found by composing $m$ with various valid permutations.
\end{remark}

\subsection{Frobenius Algebras as Groupoids}

We start our procedure of the discrete geometric quantization with constructing the groupoid corresponding to the Frobenius algebra $(IX,m,e)$. The groupoid characterization of dagger Frobenius algebras is given in \cite{CCH}. We now give the groupoid following \cite{CCH}.

\begin{definition}
The following objects and morphisms in \textbf{Rel} obtained from the Frobenius algebra $(IX,m,e)$ form a groupoid $\Sigma$ in the category of sets and functions \textbf{Set} (see \cite{CCH} Theorem 7).
\begin{itemize}
\item $\Sigma_1= IX$

\item $\Sigma_2=Image(m)=\bigcup_{k=0}^2\{(3k+1,3k+1),(3k+1,3k+2),(3k+1,3k+3),(3k+2,3k+3),(3k+2,3k+2),(3k+2,3k+1),(3k+3,3k+1),(3k+3,3k+2),(3k+3,3k+3)\}$

\item $\Sigma_0=U=Domain(e)=\{1,4,7\}$

\item $u=U\times U: \Sigma_0 \to \Sigma_1$

\item $s=\{(f,x)\in \Sigma_1\times \Sigma_0 | (f,x) \in \Sigma_2\}:\Sigma_1 \to \Sigma_0$

\item $t=\{(f,y)\in \Sigma_1\times \Sigma_0 | (y,f) \in \Sigma_2\}: \Sigma_1 \to \Sigma_0$

\item $-^*=\{(g,f)\in \Sigma_2| m(g,f)\in U , m(f,g)\in U\}: \Sigma_1 \to \Sigma_1$
\end{itemize}
\end{definition}

\begin{remark}
As proven in \cite{CCH}, this assignment is functorial., if we consider morphisms of groupoids to be subgroupoids.
\end{remark}
Considering the set $IX$ as the finite field $\mathbb{Z}_3^2$, one can equip $IX$ with the symplectic product $\omega=\langle \cdot,\cdot \rangle$.

\begin{lemma}
The graph of the multiplication $\mathfrak{m}=\{(xy, x, x|(x,y)\in \Sigma_2)\}$ is a Lagrangian subspace of $\Sigma_2 \oplus \Sigma_2 \oplus \Sigma_2$.
\end{lemma}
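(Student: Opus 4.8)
The plan is to reduce the statement to two independent facts about the arrow space $\Sigma_2 = IX$, identified with the symplectic vector space $(\mathbb{Z}_3^2,\omega)$: first, that $\mathfrak m$ is a linear subspace of the right dimension, and second, that it is isotropic for the form $\omega\oplus(-\omega)\oplus(-\omega)$ carried by $\Sigma_2\oplus\overline{\Sigma}_2\oplus\overline{\Sigma}_2$, where $\overline{\Sigma}=(\Sigma,-\omega)$ as in the definition of symplectic groupoid. I would begin by writing the groupoid composition explicitly in these coordinates. The three connected components $\{1,2,3\}$, $\{4,5,6\}$, $\{7,8,9\}$ are the level sets of the position functional $q$, so an arrow is a pair $(q,p)\in\mathbb{Z}_3^2$; two arrows are composable exactly when they lie in the same component, i.e.\ share their $q$-coordinate, and after choosing coordinates compatibly with the unit section $u$ (which selects $p=0$) and the involution $-^{*}$ (which sends $p\mapsto -p$) the groupoid law is addition in the momentum coordinate, $(q,p)\cdot(q,p')=(q,p+p')$. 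Thus each component is a copy of the group $\mathbb{Z}_3$, consistent with the groupoid structure extracted in the preceding definition.

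With composition written this way, the set of composable pairs is the linear subspace $\{((q,p),(q,p'))\}$, parametrised linearly by $(q,p,p')\in\mathbb{Z}_3^3$, and the assignment $(q,p,p')\mapsto\big((q,p+p'),(q,p),(q,p')\big)$ is an injective linear map whose image is precisely $\mathfrak m$. Hence $\mathfrak m$ is a subspace with $\dim_{\mathbb{Z}_3}\mathfrak m=3$, which is exactly half of $\dim_{\mathbb{Z}_3}(\mathbb{Z}_3^2)^{\oplus 3}=6$. This settles the dimension count, so by the criterion recorded after the definition of Lagrangian subspace it only remains to verify isotropy.

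Isotropy is the heart of the argument, and it is nothing but the multiplicativity condition $m^{*}\omega=pr_1^{*}\omega+pr_2^{*}\omega$ for the groupoid: evaluating $\omega\oplus(-\omega)\oplus(-\omega)$ on two elements $\big((q_1,p_1+p_1'),(q_1,p_1),(q_1,p_1')\big)$ and $\big((q_2,p_2+p_2'),(q_2,p_2),(q_2,p_2')\big)$ of $\mathfrak m$ and expanding $\omega((a,b),(c,d))=ad-bc$, the cross terms cancel and the total vanishes. I would record this one short computation and then conclude that $\mathfrak m$, being isotropic of dimension $\frac{1}{2}\dim$, is Lagrangian. The only genuine obstacle I anticipate is bookkeeping: fixing the sign conventions so that the multiplicative form on the triple direct sum is the one for which the composable-pair relation is linear and the cancellation occurs, and then checking that this same linear-algebra computation goes through verbatim over $\mathbb{Z}_d^{2n}$ for every odd prime $d$, which is what the general toy theory requires.
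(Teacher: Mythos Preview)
Your argument is correct, and in fact cleaner than the paper's own proof. The paper simply writes out the twenty-seven triples of $\mathfrak m$ explicitly and then exhibits a three-element basis
\[
\{((0,0),(0,1),(0,1)),\ ((0,1),(0,0),(0,1)),\ ((1,0),(1,0),(1,0))\}\subset\mathbb{Z}_3^6,
\]
declaring the span Lagrangian without further comment. You instead identify the groupoid law in phase-space coordinates, obtain $\mathfrak m$ as the image of an injective linear map $\mathbb{Z}_3^3\hookrightarrow\mathbb{Z}_3^6$, and then verify isotropy by the one-line cancellation that expresses multiplicativity of $\omega$. This buys you two things the paper's proof does not: a transparent reason why the dimension is three, and an argument that manifestly carries over to $\mathbb{Z}_d^{2n}$ for any odd prime $d$. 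You are also right to insist on the signed form $\omega\oplus(-\omega)\oplus(-\omega)$ on $\Sigma_2\oplus\overline{\Sigma}_2\oplus\overline{\Sigma}_2$; with the unsigned form $\omega\oplus\omega\oplus\omega$ suggested by the lemma's literal wording the subspace is \emph{not} isotropic (pair the first and third basis vectors above), so your care with conventions is warranted.
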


\begin{proof}
$$\mathfrak{m}=\bigcup_{k=0}^2\{(3k+1,3k+1,3k+1),(3k+2,3k+1,3k+2),(3k+3,3k+1,3k+3),(3k+3,3k+2,3k+3),$$ $$(3k+1,3k+2,3k+2),(3k+2,3k+2,3k+1),(3k+2,3k+3,3k+1),
(3k+3,3k+3,3k+2),(3k+1,3k+3,3k+3)\}.$$

Equipped with the symplectic product $\omega=\langle \cdot,\cdot \rangle$,$\mathfrak{m}$ becomes Lagrangian subspace of $\Sigma_2 \oplus \Sigma_2 \oplus \Sigma_2=\mathbb{Z}_3^6$ with the basis $$\{((0,0)(0,1),(0,1)),((0,1),(0,0),(0,1)),((1,0),(1,0),(1,0))\}$$ where $(a,b) \in \mathbb{Z}_3^2$.
\end{proof}
\subsection{Weyl Correspondence and Pair Groupoid}\label{Weyl}

In order to apply Hawkins' quantization, we require a symplectic groupoid and a Lagrangian polarization. 

\begin{proposition}
In the discrete geometric quantization procedure for the symplectic space $(M=\mathbb{Z}_3^2,\omega)$, its pair groupoid and the subspace $P=\mathbb{Z}_3$ are suitable symplectic groupoid and groupoid polarization respectively.
\end{proposition}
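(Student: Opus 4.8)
The plan is to transport the computation of Proposition \ref{Moyal} to the finite field $\mathbb{Z}_3$, the decisive point being that the odd characteristic keeps every step of the continuous argument algebraically meaningful. First I would check that the pair groupoid $\mathrm{Pair}(M)=M\times M$ of the symplectic space $(M=\mathbb{Z}_3^2,\omega)$ is a symplectic groupoid. By the pair-groupoid example this amounts to equipping the arrow space $M\times M$ with the form $\sigma((x,y),(z,w))=\omega(x,z)-\omega(y,w)$ (that is, $\omega\oplus\overline{\omega}$) and verifying the defining conditions: non-degeneracy of $\sigma$ is immediate from that of $\omega$; the unit space is the diagonal $\Delta=\{(x,x)\}$, on which $\sigma$ vanishes and which has dimension $2=\tfrac{1}{2}\dim_{\mathbb{Z}_3}(M\times M)$, so $\Delta$ is Lagrangian; and the multiplication graph $\mathfrak{m}=\{((x,z),(x,y),(y,z))\}$ is Lagrangian in $(M\times M)\oplus\overline{(M\times M)}\oplus\overline{(M\times M)}$. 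This last verification is the finite linear-algebra computation already carried out in spirit in the preceding Lemma; since $\omega$ is bilinear, skew and non-degenerate over a field of characteristic $\neq 2$, the isotropy and the dimension count go through verbatim, and the induced Poisson structure on $\Sigma_0=M$ is the one coming from $\omega$, so $\mathrm{Pair}(M)$ integrates $(M,\omega)$ in the required sense.

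Next I would reproduce the symplectomorphism of Proposition \ref{Moyal}. Because $2$ is invertible in $\mathbb{Z}_3$ (indeed $2^{-1}=2$), the midpoint--difference map $\Phi\colon M\oplus\overline{M}\to T^*M$, $\Phi(x,y)=\big(\tfrac{1}{2}(x+y),\,\hat\omega(x-y)\big)$, is a well-defined linear isomorphism, and the same calculation as before yields $\Phi^*\sigma^*=\sigma$, identifying $\mathrm{Pair}(M)$ with the cotangent groupoid $T^*M$. This is precisely where the restriction to odd primes is used, and it is what allows the continuous argument to survive the passage to the finite setting.

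I would then exhibit $P=\mathbb{Z}_3$ as a groupoid polarization. Under $\Phi$ the natural choice is the vertical distribution given by the fibres of the momentum projection $T^*M\to M^*$; these fibres are Lagrangian and the projection is a fibration of groupoids, so the distribution is simultaneously a symplectic polarization and a groupoid polarization. Concretely this vertical polarization is governed by the Lagrangian line $P=\mathbb{Z}_3\subset M$ (the position axis), exactly as a pure epistemic state is governed by a Lagrangian subspace in the toy theory; involutivity is automatic because in the linear setting the distribution is constant, and compatibility with $s$, $t$ and $m$ follows from the fact that the groupoid structure carries $P$ to the source and target fibres. Choosing the symplectic potential vanishing on $P$ (the analogue of $\theta_P=-p\,dq$) and replacing the continuous integral kernel by Gross's discrete Fourier transform, the twisted polarized convolution algebra becomes the operators on $L^2(\mathbb{Z}_3)=\mathbb{C}^3$, that is, the matrix algebra $M_3(\mathbb{C})$.

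The main obstacle I anticipate lies at the level of the groupoid polarization rather than the symplectic groupoid itself: one must give the correct finite reformulation of Hawkins' two polarization conditions (the symplectic condition that $P$ be an involutive Lagrangian distribution and the groupoid condition that $P$ be compatible with $s$, $t$ and $m$) and confirm that the single Lagrangian line $P=\mathbb{Z}_3$ induces a polarization of the four-dimensional arrow space meeting both. Since differential-geometric involutivity is vacuous over a finite set, the real content is in matching $P$ to the groupoid fibration through $\Phi$ and checking that the resulting convolution product reproduces the discrete Moyal/Weyl correspondence; the symplectic-groupoid half of the statement is then a routine transcription of the pair-groupoid example together with the preceding Lemma.
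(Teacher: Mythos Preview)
Your proposal is correct and follows the same route as the paper: define the pair groupoid on $M=\mathbb{Z}_3^2$ with the form $\omega\oplus\overline\omega$, transport it to $T^*(\mathbb{Z}_3^2)$ via the midpoint--difference map $\Phi(x,y)=\bigl(\tfrac12(x+y),\hat\omega(x-y)\bigr)$ (well defined because $2$ is invertible in $\mathbb{Z}_3$), and take the vertical polarization. The paper's own proof is much terser---it simply recalls the pair-groupoid structure and asserts the existence of $\Phi$---so your argument is a faithful elaboration rather than a different approach; in particular, you correctly anticipate that the line $P=\mathbb{Z}_3$ in the statement must be promoted to a Lagrangian distribution on the four-dimensional arrow space, which the paper records just after the proof as $P=\mathrm{span}\{p_1-p_2,\,q_1-q_2\}$.
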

\begin{proof}

The \emph{pair groupoid} associated to a set $M$ consists of  $Pair(M)= M \times M$, endowed with the multiplication $(x, y) \cdot (y, z) = (x, z)$. $M$ embeds in $M\times M$ as the diagonal $\{(x,x)| x\in M\}$, and $s$ and $t$ are the projections $s(x, y) = (y, y)$ and $t(x, y) = (x, x)$. In this groupoid, there is exactly one arrow from any object to another.
Starting with pair groupoid $M=\mathbb{Z}_3^2 \times \mathbb{Z}_3^2$, one can define a symplectomorphism $\Phi:\mathbb{Z}_3^2 \times \mathbb{Z}_3^2 \to \mathbb{Z}_3^2 \times \mathbb{Z}_3^{2*}=T^*(\mathbb{Z}_3^2)$. It is then clear that such pair groupoid is symplectic and it integrates the symplectic space $(M=\mathbb{Z}_3^2,\omega)$.\end{proof}

Now, choosing the polarization as
$$P=span\{p_1-p_2, q_1-q_2\},$$
we can adapt the procedure in continous case to obtain the discrete Fourier transform:
$$\mathcal{F}f(p,q)=C\sum_{b\in \mathbb{Z}_3}f(\frac{p+q}{2},b)e^{\frac{2\pi}{d}ib(p-q)}$$
which maps complex functions $f:\mathbb{Z}_3^2  \to \mathbb{C}$ to projective operators in the twisted group algebra $C^*(\mathbb{Z}_3^2, \sigma)$ as a result of Hawkins quantization.

\begin{remark}
Note that we cannot apply the same procedure to the toy bits, i.e. $\Omega=\mathbb{Z}_2$, as the symplectomorphism $\Phi$ and other steps of quantization include division by two.
\end{remark}
Our main result produces a functorial quantization via symplectic groupoids, in the case of epistricted theories with an odd prime number of degrees of freedom.
\begin{theorem}[Main result for the finite case]
The discrete geometric quantization procedure is a functor from the Frobenius algebra in \textbf{Rel} for epistricted theories to the Frobenius algebra for stabilizer quantum mechanics in odd prime discrete case.
\end{theorem}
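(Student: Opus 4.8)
The plan is to realize the discrete quantization as a composite of four functors, each of which is either already known to be functorial or can be checked to be so on the restricted source category, and then to invoke the fact that a composite of functors is again a functor. Write $\mathbf{Frob}(\mathbf{Rel})$ for the category of special dagger Frobenius algebras in \textbf{Rel} equipped with the morphisms of the preceding definition (conditions (R) and (I)), and $\mathbf{Frob}(\mathbf{FHilb})$ for the corresponding category over \textbf{FHilb}. I want to construct a functor $Q:\mathbf{Frob}(\mathbf{Rel})\to\mathbf{Frob}(\mathbf{FHilb})$ whose value on $(IX,m,e)$ is the stabilizer Frobenius algebra for qutrits.

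First I would record the functoriality of the groupoid assignment. By the Remark following the definition of $\Sigma$, the rule $(X,m,e)\mapsto\Sigma(X)$ of \cite{CCH} is functorial once morphisms of groupoids are taken to be subgroupoids; this supplies the first arrow $\mathbf{Frob}(\mathbf{Rel})\to\mathbf{Gpd}$. Next I would make the passage to the symplectic pair groupoid functorial, using the $CP^*$-construction of \cite{CHK}: the pair groupoids are exactly the objects of $CP^*[\mathbf{Rel}]$, so a Frobenius-algebra morphism induces a morphism in $CP^*[\mathbf{Rel}]$, i.e.\ a groupoid morphism of pair groupoids. By the Lemma this morphism carries the Lagrangian graph of multiplication to a Lagrangian, hence respects the symplectic form $\omega=\langle\cdot,\cdot\rangle$ on $IX\cong\mathbb{Z}_3^2$. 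The point to verify is that the symplectomorphism $\Phi$ of the preceding Proposition is natural in the underlying symplectic space, so that it assembles into a natural transformation rather than a single ad hoc choice.

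The third and central step is to show that Hawkins' quantization $\Sigma\mapsto C^*_P(\Sigma,\sigma)$ is functorial, which requires reconciling two a priori distinct kinds of functoriality. On the one hand, Section 2.4 records that the symplectic-groupoid quantization is functorial with respect to the choices of polarization and half-form bundle, the affine $2$-morphisms between Lagrangian polarizations inducing $C^*$-algebra automorphisms. On the other hand, I need functoriality in the morphisms of the source, namely the subgroupoid inclusions coming from Frobenius-algebra morphisms. The plan is to show that each such inclusion restricts the polarization $P=\mathrm{span}\{p_1-p_2,\,q_1-q_2\}$ compatibly, so that the discrete Fourier kernel of Gross \cite{G} intertwines the two twisted convolution products; this exhibits a $*$-homomorphism of twisted group algebras covering the given morphism and identifies the target $C^*(\mathbb{Z}_3^2,\sigma)$ with stabilizer quantum mechanics for qutrits. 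Finally I would apply Vicary's theorem \cite{V}, under which finite dimensional $C^*$-algebras are equivalent to special dagger Frobenius algebras in \textbf{FHilb}; being an equivalence, it is in particular functorial, furnishing the last arrow into $\mathbf{Frob}(\mathbf{FHilb})$. Composing the four functors yields $Q$, and preservation of identities and composition is inherited from the constituents.

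The main obstacle, I expect, is the central step: one must check that the two notions of functoriality genuinely agree, that is, that the automorphisms produced by the affine $2$-morphisms of Section 2.4 are compatible with the $*$-homomorphisms induced by subgroupoid inclusions, and that the Gross kernel is natural with respect to these inclusions rather than merely well defined object by object. I also expect to have to use essentially that $d$ is an odd prime: as the Remark preceding the theorem notes, both the symplectomorphism $\Phi$ and the Fourier kernel involve division by two, so the construction degenerates for toy bits, and the argument that $Q$ lands in the stabilizer subcategory must exploit the invertibility of $2$ in $\mathbb{Z}_d$.
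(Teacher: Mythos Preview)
Your overall architecture matches the paper's: both arguments factor the quantization as a composite
\[
\mathbf{Frob}(\mathbf{Rel}) \to \mathbf{Gpd} \to \mbox{(symplectic pair groupoids)} \to C^*\mbox{-algebras} \to \mathbf{Frob}(\mathbf{FHilb}),
\]
invoking \cite{CCH} at the first step and Vicary's theorem \cite{V} at the last, with Hawkins' recipe in the middle and the odd-prime hypothesis used exactly where you place it. Where you differ is the second step. You pass to the pair groupoid via the $CP^*$-construction of \cite{CHK}; the paper instead builds it algebraically inside $\mathbf{Rel}$ as the \emph{endomorphism monoid} $End(IX)=(IX\times IX,\ \mathrm{id}_{IX}\times\eta^{\dagger}\times\mathrm{id}_{IX},\ \eta)$ and then embeds the original algebra via $h:=m:(IX,m,e)\to End(IX)$, citing Lemma 3.19 of \cite{V} for the embedding and Corollary 4.4 of \cite{CCK} for its functoriality, before reapplying the \cite{CCH} construction to $End(IX)$ to read off the pair groupoid $\bar\Sigma$. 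This buys the paper an explicit identification of the pair-groupoid multiplication with $\mathrm{id}\times\eta^{\dagger}\times\mathrm{id}$ and of the polarization with $\bullet\to U\times U$, so that the symplectic structure and the choice of $P=\mathrm{span}\{p_1-p_2,q_1-q_2\}$ are visibly carried along by Frobenius data rather than imposed afterwards. Your $CP^*$ route is cleaner conceptually (and the paper's introduction acknowledges it as an alternative), but it leaves you with the naturality check on $\Phi$ that you flag, which the paper sidesteps by never leaving the Frobenius language until the very end. On your ``main obstacle'' about reconciling the two functorialities of Hawkins' quantization: the paper does not resolve this any more carefully than you do; it simply asserts ``by the functoriality of quantization in this specific case'' and moves on, so your concern is legitimate but not something the paper's proof actually addresses.
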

\begin{proof}

Let us construct a pair groupoid $M$ from the dagger Frobenius algebra $(IX,m,e)$. We start with the monoid structure $(IX\times IX,id_{IX}\times \eta^{\dagger} \times id_{IX}, \eta)$ in \textbf{Rel}, where $\eta:=m^{\dagger} \circ e$. This monoid is a specific  example of \emph{endomorphism monoids} in \cite{V} which is an analogue of algebras of bounded linear operators. Note that the new monoid multiplication $m'=id_{IX}\times \eta^{\dagger} \times id_{IX}$ is precisely the multiplication in $m'((x, y), (y, z)) = (x, z)$ in the pair groupoid, and the unit is the diagonal $\eta=e \circ m:\bullet \to \{(a,a)|a\in IX\}$. The \emph{abstract polarization} $P$ in this context can be cast as $\bullet \to U\times U$. We denote this monoid as $End(IX)$

The algebra $(IX,m,e)$ can be embedded into endomorphism monoid $End(IX)$ similar to the fact that every algebra has a homomorphism into the algebra of operators. The embedding homomorphism $h:(IX,m,e)\to End(IX)$ is defined by:
$$h:=m.$$
It is easy to show that $h$ is an preserves multiplication and the unit. One can also refer to Lemma 3.19 in \cite{V} for a more general case. Let $(IX\times IX, \bar{m},\bar{e}=\eta)$ denote the image of $h$ in the endomorphism monoid. We now can construct the groupoid $\bar{\Sigma}$ from the dagger Frobenius algebra $(IX\times IX, \bar{m},\bar{e})$ following the construction in \cite{CCH} one more time:
\begin{itemize}
\item $\bar{\Sigma}_1=IX\times IX$
\item $\bar{\Sigma}_2=Image(\bar{m})$
\item $\bar{\Sigma}_0=\bar{U}=Image(\bar{e})$
\item $\bar{u}=\bar{U}\times \bar{U}$
\item $\bar{s}=\{(f,x)\in \bar{\Sigma}_1\times \bar{\Sigma}_0 | (f,x) \in \bar{\Sigma}_2\}:\bar{\Sigma}_1 \to \bar{\Sigma}_0$
\item $\bar{t}=\{(f,y)\in \bar{\Sigma}_1\times \bar{\Sigma}_0 | (y,f) \in \bar{\Sigma}_2\}: \bar{\Sigma}_1 \to \bar{\Sigma}_0$
\item $-^*=\{(g,f)\in \bar{\Sigma}_2| m(g,f)\in \bar{U} , m(f,g)\in \bar{U}\}: \bar{\Sigma}_1 \to \bar{\Sigma}_1$
\end{itemize}

By  Lemma 1 $\bar{\Sigma}$ can be equipped with a symplectic structure so that it becomes symplectic groupoid where the polarization is $P=span\{p_1-p_2, q_1-q_2\}$ corresponding to $\bullet \to U\times U$ in $(IX\times IX, \bar{m},\bar{e})$. Hence, the quantization give us a  subalgebra of $C^*(\mathbb{Z}_3^2, \sigma)$ as we only consider the linear combination of position and momentum operators. The resulting operator algebra is projective representation of finite Heisenberg group given by the above discrete Fourier transform. The resulting finite dimensional $C^*$-algebra is equivalent to a dagger Frobenius algebra in \textbf{Hilb} (see Theorem 4.7 \cite{V}). By the functoriality of quantization in this specific case and functoriality of the above embedding into $End(IX)$ (see corollary 4.4 \cite{CCK}, we obtain a functor from the dagger Frobenius algebras in \textbf{Rel} to the dagger Frobenius algebras in \textbf{Hilb}. The affine symplectic transformations of the epistricted theories are mapped the group representations of affine symplectic group which acts as a superoperator in the resulting $C^*$-algebra.
\end{proof}


\section{Conclusion and further work}
We have established the relationship between geometric quantization and quadrature subtheories for the continous degrees of freedom. We conclude that the group algebra $C^*(H)$ for Heisenberg group $H$ contains the quadrature subtheories as a result of groupoid quantization procedure. One can use this fact to give operator algebraic approach to quantum optics.

\subsection{$C^*$-quantization.}
This construction also suggests that there is a "geometric quantization" functor, from a subcategory of the category of groupoids to the category of $C^*$-algebras. Following \cite{CCH}, this corresponds to a functor from Frobenius algebras in the category \textbf{FRel} (Frobenius algebras in the category of sets and relations) to Frobenius algebras in the category of Hilbert spaces \textbf{FHilb}. The functor has to be defined in the subcategory of Frobenius algebras arising from symplectic groupoids, and the morphisms have to be adapted in order to obtain functoriality.
\subsection{The even case.}
We investigate discrete degrees of freedom. The variables in this case are chosen from a finite field instead of real numbers. Even though Spekkens' original toy theory \cite{S} is contained in the case where finite field is $\mathbb{Z}/2$, we consider odd degrees freedom. The reason is that for $\Omega= (\mathbb{Z}/2)^n$ the discrete Wigner representation can take negative values and therefore the epistricted theory does not coincide with the quadrature subtheories \cite{S1}. Our main result is to give a discrete version of groupoid quantization. The resulting algebra is $C^*(H)$ for the finite Heisenberg group $H$. This finite $C^*$-algebra corresponds to a Frobenius structure via the construction of Vicary \cite{V}. Thus, one can study quantum phenomena such as complementarity in quadrature theories in this algebraic framework.
\subsection{Geometric quantization over finite fields.} In the work of Gurevich and Hadani \cite {GH}, a functorial description of geometric quantization is developped for vector spaces over fields with positive characteristics. The odd prime case is resemblant to the discrete geometric quantization procedure we have described in this paper. We expect to have a more explicit  comparison in the future between our quantization procedure for the odd finite case and this geometric quantization program 

\section{Acknowledgements}
We thank to Chris Heunen for bringing the question of complementarity in Spekkens' toy theory to our attention. AND acknowledges King Fahd University of Petroleum and Minerals (KFUPM) for funding this work through project No. SR141007. IC was partially supported by the SNF grant PBZHP2-147294.

\nocite{*}
\bibliographystyle{eptcs}

\begin{thebibliography}{50}


\bibitem{AC}
S. ~Abramsky, B. ~Coecke,
\newblock \emph{A categorical semantics of quantum protocols},
\newblock In: Proceedings of the 19th Annual IEEE Symposium on Logic on Computer Science (LICS'04), 415-425.

\bibitem{AE}
S. T. ~Ali, T. ~Rudolph and M. Englis,
\newblock \emph{Quantization methods: a guide for physicist and analyst},
\newblock Rev. Math. Phys. \textbf{17} (2005), no. 4, 391-400.

\bibitem{BKK}
C. ~Beny, A. ~Kempf and D. W. ~Kribs,
\newblock \emph{Generalization of Quantum Error Correction via Heisenberg Picture},
\newblock Physical Review Letters \textbf{98} (2007), 100502.

\bibitem{BW}
S. ~Bates, A. ~Weinstein,
\newblock \emph{Lectures on Geometry of Quantization},
\newblock Berkeley Mathematics Lecture Notes, Vol. 8,(American Mathematical Society, Providence, RI, 1997).

\bibitem{B}
S. D. ~Bartlett, T. ~Rudolph and R. W. Spekkens,
\newblock \emph{Reconstruction of Gaussian quantum mechanics from Liouville mechanics with an epistemic restriction},
\newblock Physical Review A \textbf{86} (2012), no. 1, 012103.

\bibitem{CCH}
A.~Cattaneo, I. ~Contreras and C. ~Heunen,
\newblock \emph{Relative Frobenius Algebras are Groupoids},
\newblock Journal of Pure and Applied Algebra (2013), 217:114-124.

\bibitem{CCK}
B.~Coecke, C.~Heunen, A.~Kissinger
\newblock \emph{Categories of Quantum and Classical Channels},
\newblock Quantum Information Processing(2014), doi:10.1007/s11128-014-0837-4.


\bibitem{CE}
B.~Coecke and B. ~Edwards,
\newblock \emph{Toy quantum categories}, arXiv:1108.1978v1, 2011.
\newblock Electronic Notes in Theoretical Computer Science \textbf{270} (2011), no. 1, 29--40.


\bibitem{CES}
B.~Coecke, B. ~Edwards and R. W. Spekkens,
\newblock \emph{Phase Groups and the Origin of Non-locality for Qubits},
\newblock Electronic Notes in Theoretical Computer Science \textbf{270} (2011), no. 2, 15--36.

\bibitem{CHK}
B.~Coecke, C. ~Heunen and A.~Kissinger,
\newblock \emph{Categories of quantum and classical channels},
\newblock Electronic Proccedings in Theoretical Computer Science \textbf{158} (2014), 1--14.




\bibitem{G}
D.~Gross,
\newblock \emph{Hudson's Theorem for finite-dimensional quantum systems},
\newblock Journal of Mathematical Physics \textbf{47} (2006), 122107.

\bibitem{GH}
S.~Gurevich and R. ~Hadani,
\newblock \emph{Quantization of symplectic vector spaces over finite fields},
\newblock Journal of Symplectic Geometry \textbf{7} (2009), no. 4, 475--502.

\bibitem{GB-V}
J.M. ~Gracia-Bondia and J.C. Varilly,
\newblock \emph{From Geometric Quantization to Moyal Quantization},
\newblock Journal of Mathematical Physics \textbf{36}(2691)(1995),
\newblock arxiv: 9406170

\bibitem{H}
E.~Hawkins,
\newblock \emph{A groupoid approach to quantization},
\newblock Journal of Symplectic Geometry \textbf{6} (2008), no. 1, 61--125.

\bibitem{KLPL}
D. W. ~Kribs, R. ~Laflamme, D. ~Poulin and M. ~Lesosky
\newblock \emph{Operator Quantum Error Correction},
\newblock Quantum Information \& Computation \textbf{6} (2006), issue 4, 382--399.

\bibitem{M}
S.~Maclane,
\newblock \emph{Categories for the Working Mathemaician},
\newblock Second Edition (Springer--Verlag, 2000).

\bibitem{P}
D. ~Poulin,
\newblock \emph{Stabilizer Formalism for Operator Quantum Error Correction},
\newblock Physical Review Letters \textbf{95} (2005), 230504.

\bibitem{R}
M. A. ~Rieffel,
\newblock \emph{Quantization and $C*$-algebras},
\newblock Contemporary Mathematics \textbf{167} (1994), 67--97.

\bibitem{R2}
M. A. ~Rieffel
\newblock \emph{Deformation quantization for actions of $\mathbb{R}^d$},
\newblock Mem. Am. Math. Soc. \textbf{106}(506) (1993).

\bibitem{S1}
R. W. ~Spekkens,
\newblock \emph{Evidence for the epistemic view of quantum states: A toy theory},
\newblock Physical Review A \textbf{75} (2007), no. 3, 032110.


\bibitem{S}
R. W. ~Spekkens,
\newblock \emph{Quasi-quantization: classical statistical theories with an epistemic restriction},
\newblock arxiv: 1409.5041, 2014.

\bibitem{V}
J. ~Vicary,
\newblock \emph{Categorical Formulation of Finite-Dimensional Quantum Algebras},
\newblock Communications in Mathematical Physics \textbf{304} (2011), no. 3, 765--796.



\bibitem{W}
N. M. J. ~Woodhouse,
\newblock \emph{Geometric Quantization},
\newblock (Oxford University Press, New York, 1992).






%

\end{thebibliography}

\end{document}